\definecolor{darkred}{rgb}{0.5,0.2,0.2}
\theoremstyle{plain}
\newtheorem{proposition}{Proposition}
\newtheorem{remark}{Remark}
\newcommand{\E}{\mathbb{E}}
\def\bi{\begin{itemize}}
\def\ei{\end{itemize}}
\newif\ifi
\title{A Regression-Based Approach to the CO$_2$ Airborne Fraction: \\ Enhancing Statistical Precision and Tackling Zero Emissions}
\author{Mikkel Bennedsen, Eric Hillebrand, Siem Jan Koopman}
\begin{document}
\maketitle

\begin{abstract}
The global fraction of anthropogenically emitted carbon dioxide (CO$_2$) that stays in the atmosphere, the CO$_2$ airborne fraction, has been fluctuating around a constant value over the period 1959 to 2022. The consensus estimate of the airborne fraction is around $44\%$; the remaining $56\%$ is absorbed by the oceanic and terrestrials biospheres. 
In this study, we show that the 
conventional estimator 
of the airborne fraction, based on a ratio of changes in atmospheric CO$_2$ concentrations and CO$_2$ emissions, suffers from a number of statistical deficiencies, such as non-existence of moments and a non-Gaussian limiting distribution. 
We propose an alternative regression-based estimator of the airborne fraction that does not suffer from these deficiencies.
We show that the regression-based estimator 
has a Gaussian limiting distribution and reduces estimation
uncertainty substantially.  Our empirical analysis leads to an estimate of
the airborne fraction over 1959--2022 of $47.0\%$ ($\pm 1.1\%$; $1 \sigma$), implying a higher, and better constrained, estimate than the current consensus. 
Using climate model output, we show that a regression-based approach provides sensible estimates of the airborne fraction, also in future scenarios where emissions are at or near zero.
\end{abstract}

\newpage

\clearpage

The amount of anthropogenically emitted carbon dioxide (CO$_2$) that stays in the atmosphere, the so-called \emph{airborne fraction} (AF), is an important quantity for the study of CO$_2$ absorption in the carbon-cycle of the Earth system \citep[][]{BK1973,Siegenthaler1978,Gloor2010}.  
In the literature it has been investigated and debated whether the AF has increased, decreased, or remained constant over the period from 1959 to today, during which atmospheric measurements of CO$_2$ concentrations have been available.
Earlier studies found evidence of an increasing AF \citep[e.g.][]{Canadell2007,Raupach2008,LeQuere2009}, even though measurement and estimation uncertainty make these findings statistically dubious  \citep[e.g.][]{Knorr2009,Ballantyne2015}. Later studies suggest that the AF has remained constant around $44\%$, and this has become the consensus view  \citep{Raupach2014,BHK2019a,Canadell2021,BHK2023a}.  \cite{Raupach2013} shows that the AF is given by a constant in a system where emissions follow an exponential trajectory and the sink uptake is linear in atmospheric CO$_2$ concentrations. \cite{BHK2023b} formalize such a system statistically, also allowing for linear growth of emissions on the more recent sample, and report a point estimate of the AF of $0.44$.

Previous studies have analyzed the AF as the ratio of yearly changes in
atmospheric CO$_2$ concentrations ($G_t$, numerator) and anthropogenic CO$_2$ emissions
($E_t$, denominator) \citep[e.g.][]{Canadell2007,Raupach2008,LeQuere2009,Knorr2009,Ballantyne2012,Raupach2014,Ballantyne2015,keenan2016recent,BHK2019a,vMa2023,Pressburger2023}.
 An alternative to this approach is to consider the 
\emph{cumulative airborne fraction} \citep[CAF; e.g., ][]{Jones2013,Jones2016,Liddicoat2021},  but since this approach  is less commonly used in the literature and is less amenable to statistical analysis, we  only briefly address the CAF in our empirical applications and omit it from our theoretical discussions.    
Instead, we follow the main body of the literature and adopt the  conventional estimator of the AF which is
defined as the sample mean of the yearly ratio $G_t/E_t$.
In this paper, we show that this estimator suffers
from a number of statistical deficiencies due to its definition as the ratio of two stochastic processes. One problem is that the denominator in the ratio, $E_t$, may have a positive probability density at zero, implying that the estimator does not possess any moments.
This lack of moments implies, for instance, that the mean and variance of the estimator do not exist. Although this issue may be of limited concern during the period 1959--2022, it becomes important in future scenarios where CO$_2$ emissions decrease, such as those consistent with ``net-zero'' CO$_2$ emissions, a committed goal of the international community \citep[][]{Rockstrom2017,Riahi2022}.
For example, issues with the conventional analysis of the AF when emissions decrease are
recently encountered in \cite{Pressburger2023},
where the future AF implied by output from a climate model is studied.
Another problem is that the time series of yearly changes in atmospheric concentrations $G_t$ and emissions $E_t$ exhibit a trend,
which implies that a central limit theorem does not hold for the estimator.
Therefore, its limiting distribution may be non-Gaussian,
unless a separate assumption of Gaussianity is imposed. This implies that confidence intervals and $p$-values for test statistics based on the Gaussian distribution
may not be valid for the conventional ratio-based estimator. 

We show that the time series of yearly
changes in atmospheric CO$_2$ concentrations, $G_t$, cointegrates with
the time series of anthropogenic CO$_2$ emissions, $E_t$,
over the period 1959--2022.
Cointegration means that, even though
$G_t$ and $E_t$ are  non-stationary time series individually,
the linear regression equation $G_t = \alpha E_t + u_t$, with constant
parameter $\alpha$, has a stationary error $u_t$. In other words, the non-stationary processes
$G_t$ and $E_t$ are cointegrated when
a linear combination exists that yields a stationary process. 
The parameter $\alpha$ denotes the fraction of emissions that is
added to the atmosphere in year $t$, i.e., in the linear regression, $\alpha$ represents the AF. We show the constancy of $\alpha$ statistically on the historical sample. In future scenarios where emissions decrease substantially, $\alpha$ will be time-varying, however.

We show that estimating  $\alpha$ using an ordinary
least squares method applied to the linear regression equation $G_t = \alpha E_t + u_t$  yields an AF 
estimator with superior statistical properties
compared to the conventional AF estimator based on the ratio $G_t \, / \, E_t$.
This regression-based estimator does not suffer from the defects of the ratio-based estimator described above, and, under mild assumptions, converges to the true value $\alpha$ at the fast rate of $T^{3/2}$,
where $T$ is the sample size.
A central limit theorem applies to the regression-based estimator, i.e. the estimator has a
Gaussian distribution as the sample size increases to infinity for a wide range of distributions
that can be assumed for the regression error $u_t$. Conventional confidence intervals and $p$-values for various test statistics based on the Gaussian distribution are therefore valid for the regression-based estimator. We generalize the regression approach such that $\alpha$ is time-varying for the study of future net-zero emission scenarios.

We apply the ratio-based and the regression-based estimators of the AF to yearly data from the
Global Carbon Project \citep[][]{GCB2023} over the period 1959--2022,
and we find that the regression-based  estimator improves precision, as measured
by the standard error of the estimator, by approximately $11\%$ compared
to the    ratio-based  estimator.
Our estimate of the AF over the period 1959--2022 is $44.8\%$ with an
associated standard error of $1.4\%$. Hence, this estimate is
not significantly different from the consensus estimate of $44\%$.
To further reduce estimation uncertainty, we also consider the
inclusion of additional data
on the El Ni\~no-Southern Oscillation (ENSO) and volcanic activity as covariates 
in the analysis \citep{Raupach2008}.  
In this case, the regression-based estimator improves precision by
approximately $16\%$, compared to the    ratio-based  estimator of the AF.
Our best estimate of the AF over the period 1959--2022 is $47.0\%$
with an associated standard error of $1.1\%$, which leads to a $95\%$
confidence interval of $[44.9\%, 49.0\%]$ for the AF.
The consensus estimate of $44\%$ \citep[][p. 676]{Canadell2021}
falls just outside this confidence interval.
Our empirical results thus imply a higher, and better constrained, AF than
the current consensus.  The higher value of the estimate is mainly driven by the inclusion of ENSO and volcanic activity; the tighter confidence interval is mainly driven by the regression-based estimation approach.

We find that the AF has  remained approximately constant over the historical period 1959--2022,
corroborating earlier results \citep[][]{Raupach2014,BHK2019a,Canadell2021}. 
Future trajectories of emissions will likely result in a non-constant AF, however. This may  be either because of emissions trajectories  departing from exponential growth or from changing dynamics in the carbon sinks, arising from e.g. saturation \citep[][]{LeQuere2007,Canadell2007b} or climate feedback effects \cite[][]{Friedlingstein2015}, resulting in a non-linear relationship between sink activity and atmospheric concentrations. Climate models have shown that the AF tends to increase in future high-emission scenarios and decrease in low-emission scenarios \citep[][]{Jones2013}. 
Using output from the reduced-complexity climate model MAGICC \citep[][]{MAGICC},
we illustrate the challenges in applying the ratio-based estimator of the AF to low-emission scenarios,
and we show that a regression-based approach alleviates these difficulties.
Furthermore, we show that generalizing the regression-based estimator to a time-varying AF can be a powerful tool in analyzing the dynamics of the AF in low-emission scenarios output from climate models, such as scenarios compatible with the Paris Agreement.


\section*{Atmospheric changes, emissions, and cointegration}\label{sec:coint}
Figures \ref{fig:data}a) and \ref{fig:data}b) show yearly changes
in atmospheric CO$_2$ ($G_t$) and yearly CO$_2$ emissions from
anthropogenic sources ($E_t$), respectively. The black line in 
Figure \ref{fig:data}c) shows the ratio of these two variables, $G_t/E_t$.
Data are obtained from the Global Carbon Project and cover
the period $1959$--$2022$; 
in the empirical section below we provide
further details on the data.
The most conspicuous feature of the two data series $G_t$ and $E_t$ is that both time series exhibit  upwards trends. 
Trending behavior is indicative of time series being non-stationary.
A simple least-squares statistical analysis of the bivariate system $(G_t,E_t)$, where the non-stationarity of a time series is not accounted for, yields invalid inference and should be avoided \citep{Granger1974}. However, the notion of cointegration  \citep[see, for example, chapter 19 in][ for a textbook treatment]{hamilton1994} allows us to keep working with the trending time series
$G_t$ and $E_t$ while still obtaining
valid statistical inference. Cointegration methods have been applied in earlier climate studies  \citep[e.g.][]{KS2002,JohansenJoC2012}. 
Informally, two time series are cointegrated if they share a common trend. Formally, the time series $G_t$ and $E_t$ are said to be cointegrated when
both $G_t$ and $E_t$ are non-stationary, and
the error term $u_t$ in the regression equation $G_t = \alpha E_t + u_t$
is stationary.
We adopt the Dickey-Fuller test \citep{dickey1979distribution} to
determine whether a time series is non-stationary.
The test statistic is for the null hypothesis of a unit root, that is, of
having a unit value for the autoregressive dependence of $G_t$
on its lagged value $G_{t-1}$ (and $E_t$ on $E_{t-1}$). The results from this test strongly suggest that both time series are non-stationary (Supplementary Information \ref{app:unitroot}). 
The null hypothesis of no cointegration (that is, $u_t$ is non-stationary)
can be tested formally using the Engle-Granger test \citep{EG1987}, which is a Dickey-Fuller test on the residuals in the regression $G_t = \alpha E_t + u_t$, adjusted for the fact that these residuals are not observed but must be estimated. The null hypothesis of a unit root in $u_t$ is firmly rejected (Supplementary Information \ref{app:unitroot}), and we can therefore conclude that the two yearly time series $G_t$ and $E_t$ are cointegrated.

The cointegration analysis in \ref{app:unitroot} supports the hypothesis that the AF parameter $\alpha$ is constant during the period studied here (1959--2022). 
If the parameter $\alpha$ was changing in a specific direction, this would introduce a trend in the residuals $u_t$.  The result from the Engle-Granger test shows that a trend is not present. This is confirmed graphically by the blue line in Figure \ref{fig:data}f) and is in line with recent studies
\citep{Raupach2014,BHK2019a,Canadell2021,BHK2023a}.
A Jarque-Bera test \citep[][]{JarqueBera1987} for normality of the estimated residuals for $u_t$
results in a $p$-value of $24\%$, implying that we cannot reject the null of $u_t$ having a Gaussian distribution.

\begin{figure}[t!]
    \centering
    \caption{\emph{a): Atmospheric concentration changes ($G_t$) and b): Emissions ($E_t$) data used in the study over the period $1959$--$2022$.
    c): Data  ($G_t/E_t$) in black, fit of \eqref{eq:AForig} in blue (solid), fit of \eqref{eq:AForig2} in red (dashed), $95\%$ confidence bands (shaded), ``intercept'' denotes the ratio-based estimate of
    the AF $\alpha$. d): Ratio-based estimated residuals $\hat u_t$ from \eqref{eq:AForig} in blue (solid) and from \eqref{eq:AForig2} in red (dashed).
    e): Data  ($G_t$) in circles, fit of \eqref{eq:AFnew} in blue (solid), fit of \eqref{eq:AFnew2} in red (dashed), $95\%$ confidence bands (shaded), ``slope''
    denotes the regression-based estimate of the AF $\alpha$. f) Regression-based estimated residuals $\hat u_t$ from \eqref{eq:AFnew} in blue (solid) and from \eqref{eq:AFnew2} in red (dashed).}}
    \includegraphics[width=0.99\textwidth]{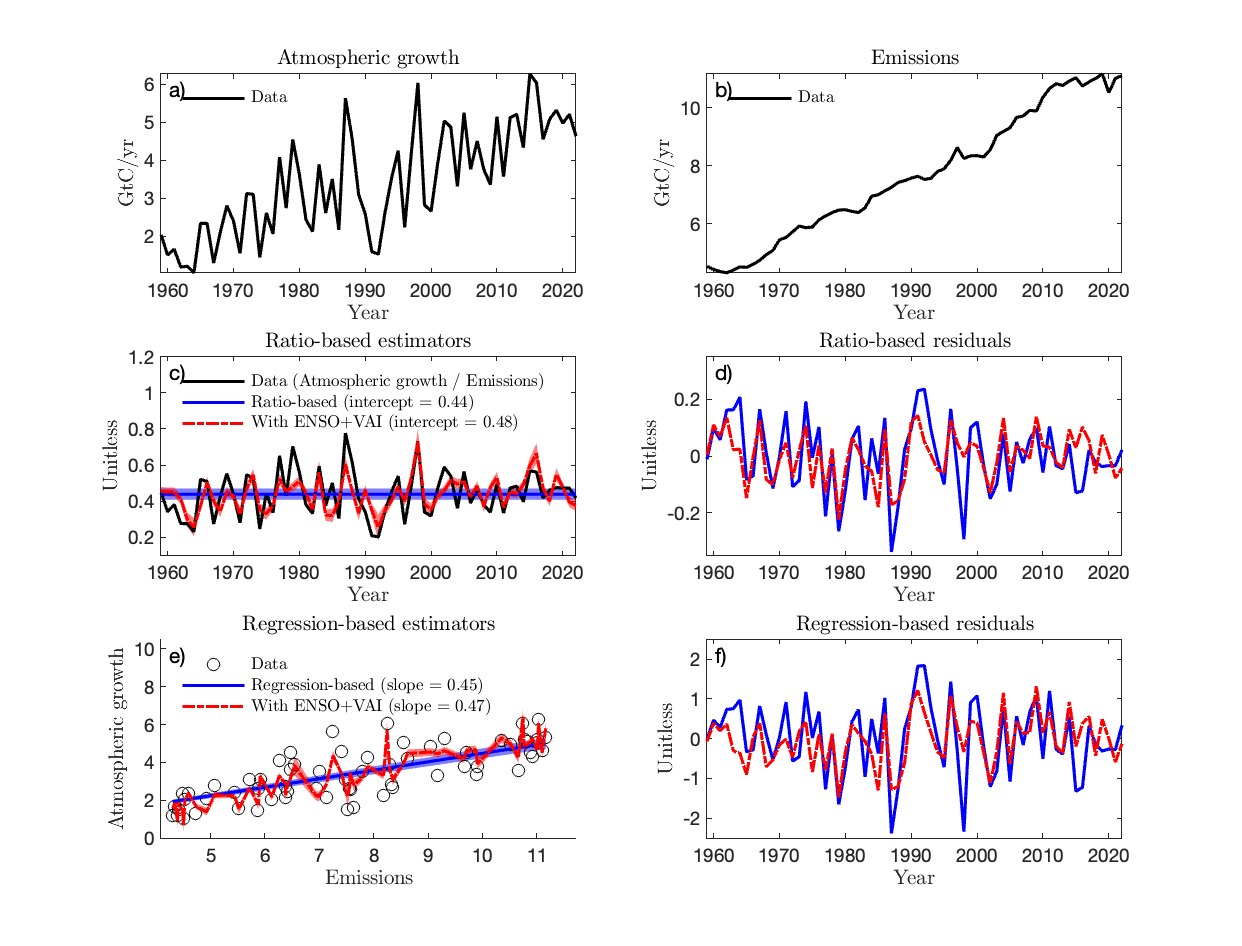}
    \label{fig:data}
\end{figure}

\section*{A regression-based estimator of the airborne fraction}\label{sec:theory}
A ratio-based approach to estimating the AF takes its departure in 
the statistical model given by
\begin{align}\label{eq:AForig}
\frac{G_t}{E_t} = \alpha + u^{(1)}_{t},
\end{align}
where $G_t$ are the yearly changes in atmospheric concentrations of CO$_2$, $E_t$ are yearly CO$_2$ emissions, the constant parameter $\alpha$ is the AF, and
$u^{(1)}_{t}$ is the disturbance modelled as a zero-mean error process,
for $t = 1, 2, \ldots, T$, with $T$ denoting the number of yearly observations in the sample.
The disturbance $u^{(1)}_{t}$ captures deviations of the data ${G_t}/{E_t}$ from the constant value $\alpha$  due to measurement errors and internal variability of the climate system. 
For the statistical model \eqref{eq:AForig}, it is straightforward to estimate the AF parameter $\alpha$ using the
sample mean of the data ${G_t}/{E_t}$, yielding the \emph{ratio-based estimator} as given by
\begin{align*}
\hat \alpha_1 = \frac{1}{T}\sum_{t=1}^T \frac{G_t}{E_t}.
\end{align*}

The model in equation \eqref{eq:AForig} expresses that, on average, the fraction $\alpha$ of emissions $E_t$ is absorbed in the atmosphere, resulting in atmospheric concentrations increasing with the amount $G_t$.
An alternative way to express this association between $G_t$ and $E_t$ is through the model formulation
\begin{align}\label{eq:AFnew}
G_t = \alpha E_t+ u^{(2)}_t,
\end{align}
for $t = 1, 2, \ldots, T$, where the disturbance $u^{(2)}_{t}$ is also a zero-mean error process. A model closely related to \eqref{eq:AFnew} has previously been used to reconstruct and predict CO$_2$ growth rates \citep[][]{JC2005,Betts2016}.  
The relationship between the disturbances in equations \eqref{eq:AForig} and \eqref{eq:AFnew} is given by $u^{(1)}_t = u^{(2)}_t\, / \, E_t$. 
Cointegration of $G_t$ and $E_t$ implies that $u_t^{(2)}$ is a stationary process.
Then, the parameter $\alpha$ can be estimated directly
using a simple least-squares calculation, yielding the \emph{regression-based estimator} as given by
\begin{align*}
\hat \alpha_2 = \left( \sum_{t=1}^T E_t^2 \right)^{-1} \sum_{t=1}^T E_t G_t.
 \end{align*}

\subsection*{Statistical properties of the two estimators}\label{sec:E}

In the case of independent data, it is known that the regression-based estimator $\hat \alpha_2$ is efficient, i.e., it has lower estimation uncertainty compared to, for example, the ratio-based estimator $\hat \alpha_1$ \citep[e.g.,][]{Cochran1977,DW1987}. We study this question for the case of cointegrated non-stationary time series in this paper, which is shown to be the relevant case for the AF in \ref{app:unitroot}.

In the Supplementary Information (Section \ref{app:aTheory}), we derive the asymptotic properties, that is, consistency and asymptotic normality, for the regression-based and the ratio-based estimators. These properties depend on the dynamics of CO$_2$ emissions, which are well-described by a random walk with drift over the sample 1959--2022, i.e. $E_t = E_0 + bt + x_t$, where $b>0$ and $x_t$ is a random walk (Supplementary Information \ref{app:emissions}).\footnote{The variability of the differences in CO$_2$ emissions increases in the early 1990s (Supplementary Information, Figure \ref{fig:dE}). This is most likely due to increased variability of emission estimates from land-use and land-cover change, starting in the early 1990s (Supplementary Information \ref{app:diffdata}). Below, we analyze the more recent subsample 1992--2022 as a robustness check, and we find that the results are similar to those obtained for the full   sample.} 
In Proposition \ref{prop:park} of Section \ref{app:aTheory}, we show that the regression-based estimator $\hat \alpha_2$ converges to the data-generating AF $\alpha$ at rate $T^{3/2}$, where $T$ is the sample size. The ratio-based estimator $\hat \alpha_1$, on the other hand, converges to the data-generating AF $\alpha$ at the slower rate $T$, as is shown in Proposition \ref{prop:a1}. This implies that, when the sample size $T$ is large, the estimation uncertainty in the regression-based estimator will be lower than in the ratio-based estimator, as in the case of independent data.

If the process $E_t$ has positive probability density at zero, then the ratio-based estimator $\hat \alpha_1$ does not have a finite mean or variance, as Proposition \ref{prop:moments} shows. This follows directly from the definition of $\hat \alpha_1$ as the sample mean of $G_t/E_t$: if values $E_t = 0$ have positive probability in the sample space, then the ratio $G_t/E_t$ is not integrable on that space.

The maintained model assumption $E_t = E_0 + bt + x_t$ for CO$_2$ emissions implies positive probability density at zero, if $x_t$ is a random walk of, for example, Gaussian increments. On the sample 1959--2022, however, the trend $E_0 + bt$ is much larger in magnitude than the random walk $x_t$, and it is not unreasonable to assume $x_t = 0$ for theoretical purposes. In this case, the ratio-based estimator $\hat \alpha_1$ has some standard statistical properties: It is an unbiased estimator of $\alpha$, that is, the mean of $\hat \alpha_1$ equals  $\alpha$, and the variance of  $\hat \alpha_1$ has a simple expression that can easily be estimated. However, we show in Proposition \ref{prop:noCLT}(i) that even in this case, a central limit theorem does \emph{not} hold in general. The ratio-based estimator has a limiting Gaussian distribution only if we additionally assume that  $u_t$ is Gaussian, shown in Proposition \ref{prop:noCLT}(ii). In contrast, the regression-based estimator $\hat \alpha_2$ follows a central limit theorem with a limiting Gaussian distribution and the derivation does not require this additional assumption, as shown in Proposition \ref{prop:park}.

Although the theoretical results show that the regression-based estimator $\hat \alpha_2$ is asymptotically, i.e., for sufficiently large $T$, more precise than the ratio-based estimator $\hat \alpha_1$, it is an empirical question which estimator is more precise in finite samples. In the next section, we estimate the variances of the regression-based and the ratio-based estimators on the historical sample and compare their magnitudes.

\section*{Estimating the airborne fraction over 1959--2022} \label{sec:estimate}

We use time series data on yearly changes in atmospheric CO$_2$ ($G_t$), yearly CO$_2$ emissions from fossil fuels ($E_t^{FF}$), and yearly CO$_2$ emissions from land-use and land cover change ($E_t^{LULCC}$), for the sample 1959--2022. Total anthropogenic CO$_2$ emissions are then $E_t = E_t^{FF} + E_t^{LULCC}$. The data series are measured in gigatonnes of carbon per year (GtC/yr), obtained from the Global Carbon Project\footnote{Data from the Global Carbon Project can be found here: \url{https://www.icos-cp.eu/science-and-impact/global-carbon-budget/2023}, last accessed June 17, 2024. The time series $E_t^{FF}$ includes the cement carbonation sink, as described in \cite{GCB2023}. } and presented in Figure~\ref{fig:data}a)--b).

The ratio-based estimate $\hat \alpha_1$ and the regression-based estimate $\hat \alpha_2$ are obtained from least-squares regressions applied to models \eqref{eq:AForig} and \eqref{eq:AFnew}, respectively. The fits are shown in Figure~\ref{fig:data}c) and \ref{fig:data}e) and the associated estimated residuals $\hat u_t$ are shown in \ref{fig:data}d) and \ref{fig:data}f), all as blue lines. 
To account for possible serial correlation and heteroskedasticity in the model errors $u_t$,
we calculate standard errors using the heteroskedasticity and autocorrelation consistent (HAC) estimator of
\cite{newey1987simple}. The results are displayed in the first two columns of Table \ref{tab:analysis}.
The estimates largely agree on the magnitude of the AF,  $\hat \alpha_1 = 43.86\%$ and $\hat \alpha_2 = 44.78\%$.
However,
the standard error of $\hat \alpha_2$ is $11\%$ lower than the standard error of $\hat \alpha_1$,
showing that the faster convergence rate of this estimator ($T^{3/2}$ versus $T$)
outweighs the fact that the error
process $u_t^{(2)}$ in \eqref{eq:AFnew} has a larger variance than the error process $u_t^{(1)}$
in \eqref{eq:AForig}. In particular, the  
estimated standard deviations (SDs) of these model errors are
$\widehat{SD}(u_t^{(1)}) = 0.13$ and
$\widehat{SD}(u_t^{(2)}) = 0.91$.
The discrepancy is due to the different nature of the two models where
$u_t^{(1)} = u_t^{(2)}/E_t$, with $E_t \gg 1$ in the sample 1959--2022.

\begin{table}[h]
\caption{\it Least-squares regression output of the linear models \eqref{eq:AForig}--\eqref{eq:AFnew2} for the Global Carbon Project data on the full sample 1959--2022 (left panel) and the subsample 1992--2022 (right panel). Estimates of the AF $\alpha$ are denoted by ``$\hat \alpha$''. The table reports standard errors ``SE($\hat \alpha$)'' and standard errors relative to SE($\hat \alpha_1$) from model (1). Relative SE less than one indicates lower estimation uncertainty than the ratio-based estimator. The $95\%$ confidence interval for the AF $\alpha$, based on the Gaussian distribution, is denoted by ``$CI_{95\%}(\alpha)$'', the estimated standard deviation of the error $u_t$ by ``$\widehat{SD}(u_t)$'', and the
coefficient of determination by ``$R^2$''.}
\vspace{4mm}
\centering
\scriptsize
\begin{tabular}{lcccc|cccc@{}}
& \multicolumn{4} {c} {\bfseries Full sample (1959-2022)} & \multicolumn{4} {c} {\bfseries Recent sample (1992-2022)} \\
\cmidrule{2-9}
 &  \eqref{eq:AForig} &  \eqref{eq:AFnew} &  \eqref{eq:AForig2} &  \eqref{eq:AFnew2}  &  \eqref{eq:AForig} &  \eqref{eq:AFnew} &  \eqref{eq:AForig2} &  \eqref{eq:AFnew2}   \\ \hline 
$\hat \alpha$    &           0.4386   & 0.4478 &   0.4716  &  0.4697                            &                       0.4456  &  0.4497 &   0.4626  &  0.4613 \\
SE($\hat \alpha$)   &      0.0159 &   0.0141&    0.0126   & 0.0105                       &               0.0190 &   0.0157&    0.0124  &  0.0104  \\
Relative SE   &      1.0000 &   0.8895   & 0.7904    &0.6630                                &            1.0000   & 0.8247 &   0.6509  &  0.5464  \\
 $CI_{95\%}(\alpha)$  &  $ [ 0.4074, $ &    $ [0.4201, $ &    $ [0.4470, $  &    $ [0.4490,  $      &           $ [0.4083, $    &  $ [0.4190, $    &  $ [0.4384, $    &  $ [0.4409,  $\\ 
 &  \hspace{0.25cm} $  0.4697 ]$ &   \hspace{0.25cm}  $ 0.4755 ]$ &    \hspace{0.25cm} $  0.4962]$  &  \hspace{0.25cm}   $  0.4903]$                        &  \hspace{0.25cm} $  0.4828 ]$    &  \hspace{0.25cm} $  0.4804]$    & \hspace{0.25cm}  $  0.4869  ]$    &  \hspace{0.25cm}  $   0.4816]$\\
$\widehat{SD}(u_t)$  &        0.1258  &  0.9088   & 0.0881   & 0.6292                       &                0.1065  &  0.9309  &  0.0662  &  0.5891 \\
$R^2$ &         0 &   0.5863  &  0.5258  &  0.8080                                      &                  0  &  0.3558   & 0.6391 &   0.7592 \\
\cmidrule{2-9}
 ENSO included & No & No & Yes & Yes & No & No & Yes & Yes \\
VAI included  & No & No & Yes & Yes& No & No & Yes & Yes  \\ \hline
\end{tabular}
\label{tab:analysis}
\end{table}

By introducing covariates in  models \eqref{eq:AForig} and \eqref{eq:AFnew}, we can
reduce the variance of the error processes $u_t$ and thus achieve more precise estimates of the AF $\alpha$. For example, it is common practice in the literature to control for the effects of El Ni\~no ($ENSO_t$) and volcanic activity
($VAI_t$), \citep[e.g.][]{Raupach2008,vMa2023}.\footnote{VAI data for volcanic activity are obtained from \cite{Ammann2003}. ENSO data are constructed from the Ni\~no 3 SST Index of
the National Oceanic and Atmospheric Administration (NOAA), available at \url{https://psl.noaa.gov/gcos_wgsp/Timeseries/Data/nino3.long.anom.data},
last accessed June 17, 2024. We have converted monthly ENSO data into a yearly time series of September-August ENSO means \citep[][]{Jones2001}.
This 4-month lag provides the best fit between $G_t$ and $ENSO_t$.
The slight trend in yearly ENSO data is removed
so that it has no
impact on the AF estimates. The data are shown in Figure \ref{fig:data_all} of the Supplementary Information.}  Hence, we consider the models
\begin{align}
\frac{G_t}{E_t} &= \ \alpha \ + \ \tilde\gamma_1 ENSO_t + \tilde\gamma_2 VAI_t + u^{(3)}_{t}, \label{eq:AForig2} \\
G_t  &= \alpha E_t  + \gamma_1 ENSO_t + \gamma_2 VAI_t + u^{(4)}_t, \label{eq:AFnew2}
\end{align}
for $t = 1, 2, \ldots, T$, where $\tilde\gamma_i$ and $\gamma _i$, for $i=1,2$,
are regression coefficients,
and the model errors $u^{(j)}_{t}$ follow a zero-mean error process, for $j=3,4$.
For both models, the coefficients can be estimated using least-squares regression.
Let $\hat \alpha _3$ and $\hat \alpha _4$ denote the least-squares estimators of $\alpha$
from models \eqref{eq:AForig2} and \eqref{eq:AFnew2}, respectively.
The estimation results relevant for the AF $\alpha$ 
are presented in columns labeled as (\ref{eq:AForig2}) and (\ref{eq:AFnew2}) of Table \ref{tab:analysis}, while
the estimation results for the ENSO and VAI coefficients $\gamma_i$ and $\tilde \gamma_i$, for $i = 1,2$,
are reported in Table \ref{tab:analysis_all} of the Supplementary Information.
The regression fits of models \eqref{eq:AForig2} and \eqref{eq:AFnew2} are presented as in Figure~\ref{fig:data}c) and \ref{fig:data}e)
and their associated estimated residuals $\hat u_t$ in Figure~\ref{fig:data}d) and \ref{fig:data}f), all as red dashed lines. 
The results show that controlling for the effects of ENSO and volcanic activity increases the
estimate of the AF considerably,
resulting in $\hat \alpha_3 = 47.16\%$ and $\hat \alpha_4 = 46.97\%$.
The estimates of the standard deviations of the error terms $(\widehat{SD}(u_t^{(3)}) =  0.09$
and $\widehat{SD}(u_t^{(4)})  = 0.63)$ decrease substantially compared to those of models \eqref{eq:AForig} and \eqref{eq:AFnew}, indicating that the covariates ENSO and VAI explain much variation in the data. This is corroborated by the coefficient of determination ($R^2$) values reported in Table  \ref{tab:analysis}.\footnote{The $R^2$ of model (\ref{eq:AForig}) equals zero by construction,
since this model only features an intercept.} 
The decreased variance of the residuals from models \eqref{eq:AForig2} and \eqref{eq:AFnew2}
imply that their estimates of the constant AF $\alpha$ are more precise.
The standard error of the estimate $\hat \alpha_4$ is approximately $16\%$ lower than the one of
estimate $\hat \alpha_3$ and approximately $34\%$ lower than the one of
the conventional estimate $\hat \alpha_1$.
Our preferred estimate $\hat \alpha_4$ results in an estimated AF of $47.0\%$ ($\pm 1.1\%$; $1 \sigma$) with an associated $95\%$ confidence interval of $[44.9\%,49.0\%]$.
The slightly increased AF estimates for models \eqref{eq:AForig2} and \eqref{eq:AFnew2}, compared to the models \eqref{eq:AForig} and \eqref{eq:AFnew} without covariates, confirm a similar finding in \cite{Betts2016}.

As a robustness check, the right panel of Table \ref{tab:analysis} presents
the results for the more recent subsample 1992--2022;
Table \ref{tab:analysis1992} in the Supplementary Information contains the corresponding coefficient estimates for ENSO and VAI. Our conclusions for the full sample are corroborated by the results for the recent sample. All estimates of the AF $\alpha$ from the subsample are within the respective confidence bands of the estimates from the full sample, while the reductions in uncertainty from including the covariates and from using the regression-based estimator are similar.

\section*{Estimating the airborne fraction over 2023--2100}

The approximate constancy of the AF over the historical period 1959--2022,
as documented in the literature and confirmed by the cointegration analysis in this study,
can be taken as the result of a near exponential growth in emissions and an
approximately linear response of the carbon sinks to atmospheric concentrations \citep[][]{Raupach2014}. In scenarios describing the future, for example when emissions are declining, the AF is expected to depart from constancy and may vary over time \citep[e.g.,][]{Jones2013,Pressburger2023}. This motivates the specification of a time-varying AF $\alpha=\alpha_t$, with $\alpha_t$ denoting the AF in year $t$, i.e.
the fraction of emissions ($E_t$) added to the atmosphere ($G_t$) in year $t$. 

The ratio-based model can then be written as
\begin{align*}
\frac{G_t}{E_t} = \alpha_t + u_t^{(1)},
\end{align*}
where $\alpha_t$ is a yearly time-varying coefficient. 
The ratio $G_t/E_t$ may be used to track the amount of emitted CO$_2$ that remains airborne, and hence,
the estimate $\hat \alpha_{1,t} = G_t / E_t$ can be regarded as an appropriate but very noisy time-varying AF.
A possible way to reduce the noise in AF is to apply a local smoothing operation, e.g. a two-sided moving average filter. In any case, filtered or not, the variability of the ratio $G_t/E_t$ will be amplified when future emissions $E_t$ start to approach zero. Another possible solution for noise reduction previously suggested in the literature, is to use the cumulative AF (CAF) in place of the yearly AF \citep[e.g.,][]{Jones2013,Jones2016,Liddicoat2021}. However,  due to its cumulative nature, the CAF can be slow to detect changes in the behavior of the carbon sinks, making it less useful for the purpose of analyzing a time-varying AF (Supplementary Information \ref{app:caf}).

When considering the regression-based model with a time-varying AF $\alpha_t$, we obtain
\begin{align}\label{eq:TVa}
G_t = \alpha_t E_t + u_t^{(2)}.
\end{align}
A versatile way of treating such a time-varying regression model is to assume random walk dynamics for $\alpha_t$ and estimate it by means of a recursive regression filter, such as the Kalman filter and smoother \citep[][\S 3.6.1, \S 4.3, \S 4.4]{Durbin2012}. This approach yields the minimum mean-squared error estimator $\hat \alpha_{2,t}$, and it does not suffer from the deficiencies of the time-varying ratio-based estimator $\hat \alpha_{1,t}$.
In the regression-based model \eqref{eq:TVa}, $\alpha_t$ is multiplied by $E_t$.
When emissions turn negative for the first time in some year $t$, i.e. $E_t<0$, we let 
$\alpha_t$ be reflected around one.
For this purpose, we adjust the random walk specification at year $t$ with a one-time instantaneous shift in $\alpha_t$ when $E_t<0$ for the first time.
Section \ref{app:KF} of the Supplementary Information provides further motivation and technical detail on this procedure.

To study the performance of the ratio-based $\hat \alpha_{1,t}$ and the regression-based $\hat \alpha_{2,t}$ estimators in situations where the AF is changing over time, we apply the two estimators to output from the MAGICC reduced-complexity climate model \citep[][]{MAGICC}. We let MAGICC produce future trajectories of $G_t$ and $E_t$ for $t = 2023, 2024, \ldots, 2100$ according to the Shared Socioeconomic Pathways \citep[SSPs;][]{Riahi2017}.\footnote{The SSP scenarios can be run in MAGICC in a web browser via the link \url{https://live.magicc.org/scenarios/bced417f-0f7f-4bb7-8359-792a0a8b0368/overview}. Last accessed June 17, 2024.}  We focus on the so-called SSP1-2.6 scenario, which is a high mitigation scenario consistent with a forcing level of $2.6$ Wm$^{-2}$ in the year $2100$ \citep{Riahi2017}. Results obtained from other SSP scenarios are similar to those reported below and are presented in Section \ref{app:SSP} of the Supplementary Information. Since MAGICC is a  deterministic model without a stochastic representation of the climate variables, the  trajectories of $G_t$ and $E_t$ generated by MAGICC are very smooth. To obtain output that resembles climate data, we perturb the trajectories of $G_t$ and $E_t$ by zero-mean Gaussian noise, where we set the variances equal to the estimates obtained on the historical data.
These simulated trajectories, together with the original output from MAGICC and the historical Global Carbon Project data 1959--2022, are shown in panels a) and b) of Figure \ref{fig:SSP126}. Its panel c) presents the historical ratio $G_t/E_t$ over 1959--2022 and the ratio-based and regression-based estimates $\hat\alpha_{t}$ of the time-varying airborne fraction over 2023--2100. 

The ratio-based estimate $\hat \alpha_{1,t} = G_t/E_t$ (blue) is a very noisy series as we have anticipated, especially when $E_t \approx 0$. In contrast, the regression-based estimate $\hat \alpha_{2,t}$ (red) evolves  over time in a stable fashion and shows sensible AF estimates, also when $E_t \approx 0$. A further benefit of the regression-based method is its delivery of confidence intervals for $\hat \alpha_{2,t}$ (shaded red area) which are not immediately available for the ratio-based estimator $\hat \alpha_{1,t}$. The covariates for El Ni\~no and volcanic activity can  readily be incorporated into the regression-based framework. 

In the SSP1-2.6 scenario studied here, the regression-based AF estimate $\hat \alpha_{2,t}$ remains roughly constant until 2050, after which it gradually declines towards zero.
In 2060, the atmospheric changes turn negative, resulting in a negative estimate of the AF, meaning that the sink uptake exceeds the emissions.
In 2077, the emissions turn negative as well, instigating the visible switch to a positive estimate of the AF.
The estimates of the AF exceed one from 2077 onwards, indicating that the sinks continue to absorb CO$_2$ even in this regime with
highly negative emissions. These findings can be contrasted with the analysis of the SSP1-1.9 scenario
(Supplementary Information, Figure  \ref{fig:SSP119}) which has a similar trajectory for the regression-based AF estimates
$\hat \alpha_{2,t}$ as for the SSP1-2.6 scenario, except that from 2080 onwards we have $\hat \alpha_{2,t}<1$ implying that
the sinks turn into carbon sources (releasing more carbon dioxide than they absorb).

\begin{figure}[t!]
    \centering
   \caption{\emph{Analysis of SSP1-2.6 data over the period $2023$--$2100$.
   a) Atmospheric concentration changes ($G_t$) for the historical period 1959--2022 (black) and the SSP period 2023--2100 (blue, magenta).
   b) Emissions ($E_t$) data. Magenta lines show the output from MAGICC; blue lines show the perturbed data.
   c) Ratio of atmospheric changes to emissions ($G_t/E_t$).
   The red line in c) is the estimated fraction of emissions ($E_t$) added to the atmosphere ($G_t$) in each year $t$,
   and, more specifically, it is the regression-based estimator $\hat \alpha_{2,t}$ of the time-varying airborne fraction $\alpha_t$,
   obtained from the Kalman smoother. Shaded area is a $95\%$ confidence band around $\hat \alpha_{2,t}$.} }
    \includegraphics[width=0.99\textwidth]{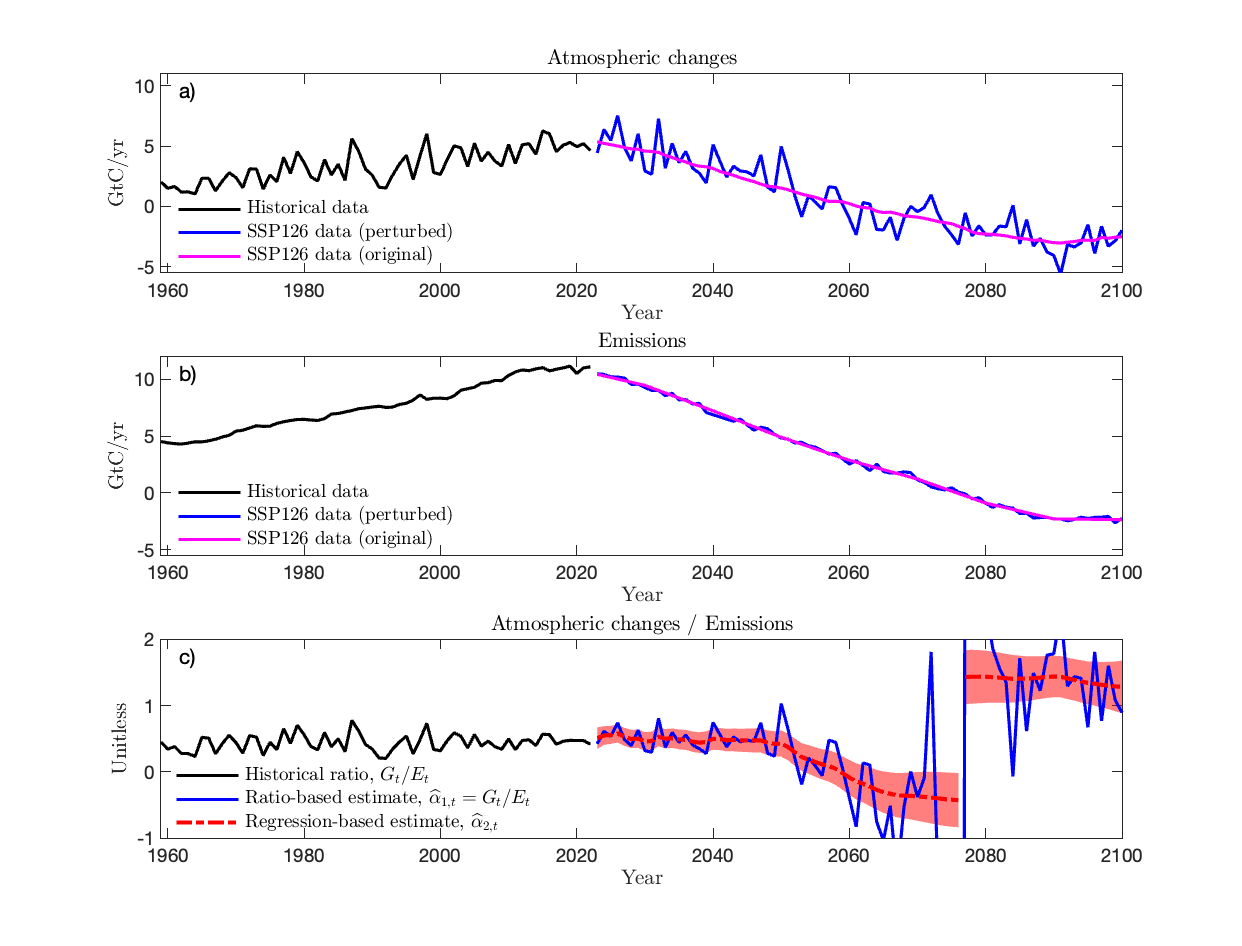}
    \label{fig:SSP126}
\end{figure}

\section*{Discussion}\label{sec:conclusion}

Our empirical findings present a slightly higher AF than the consensus estimate of $44\%$   \citep[][p. 676]{Canadell2021} and the cumulative airborne fraction
CAF$_t = 44.4\%$ (Supplementary Information \ref{app:caf})
obtained from the Global Carbon Project data. The regression-based estimate of the AF, using the 1959-2022 sample of the Global Carbon Project data and controlling for El Ni\~no and volcanic activity, is $47.0\%$ ($\pm 1.1\%$; $1 \sigma$), with a $95\%$ confidence interval of $[44.9\%,49.0\%]$.

When El Ni\~no and volcanic activity are excluded from the analysis, the estimate is $44.8\%$ ($\pm 1.4\%$; $1 \sigma$), which is more in line with the commonly reported results. When we apply the same analysis to two alternative data sets, we obtain slightly higher estimates of the AF than those from the Global Carbon Project (Supplementary Information \ref{app:diffdata}). The more recent 1992--2022 subsample yields an AF estimate of approximately $46\%$ ($\pm 1.0\%$; $1 \sigma$) for the Global Carbon Project data and slightly higher estimates for the two alternative data sets (Table \ref{tab:analysis} and Supplementary Information \ref{app:analysis1992}). To account for possible measurement error in $G_t$ and $E_t$,
we report Deming regressions \citep[][]{Deming1943} which are in line with the results reported so far (Supplementary Information \ref{app:deming}).
We may conclude that measurement error is not driving our results. 

To summarize the theoretical findings in our study, we conclude that the ratio-based estimator of the AF suffers from three main shortcomings. First, due to its definition as the ratio of changes in atmospheric concentrations to emissions, means and variances do not exist if zero emissions are possible. While this is of no concern on the historical sample, it is important when analyzing the AF on net-zero emissions scenarios. Studies of the past AF \citep[such as][among others]{Canadell2007,Raupach2008,LeQuere2009,Knorr2009,Ballantyne2012,Raupach2014,Ballantyne2015,keenan2016recent,BHK2019a,vMa2023} are most likely not influenced to any substantial degree by this issue, but studies of future low-emission scenarios are affected \citep[e.g.][]{Pressburger2023}. Second, we require stronger Gaussian assumptions on the distribution of the error process compared to the case of the regression-based estimator when a central limit theorem is invoked to compute confidence intervals and $p$-values based on a Gaussian distribution. Alternative methods such as the bootstrap can also be used for this purpose \citep[][]{Efron1993}. Again, studies on historical data are most likely not strongly affected by our findings, as Figure \ref{fig:data}d suggests Gaussian residuals.  Third, the ratio-based estimator converges to the data-generating AF at a slower rate than the regression-based estimator, even if zero-emissions are ruled out and errors are assumed to be normal. Both estimators converge faster than the common $\sqrt{T}$ rate due to the non-stationarity of the two yearly time series variables emissions ($E_t$) and changes in atmospheric concentrations ($G_t$) and their cointegration. The ratio-based estimator converges at rate $T$ and the regression-based estimator at rate $T^{3/2}$.

Our preferred regression-based estimator
in this study has standard statistical properties, such as existence of first and second moments, it is defined for zero emissions, and it converges to the data-generating AF at a fast rate. A central limit theorem applies without assuming Gaussianity of the regression error, and confidence levels and $p$-values can be computed in the usual way. Based on theoretical arguments, on a simulation study (Supplementary Information \ref{app:simstudy}), and on a historical sample of yearly data, we have shown that the regression-based estimator exhibits lower estimation uncertainty compared to the ratio-based estimator. Finally, we have argued that the regression-based estimator  can readily be generalized to a time-varying AF specification with its estimations by means of the Kalman filter and smoother.

Table \ref{tab:summaryTests} summarizes the statistical tests performed in this study. The main empirical findings are: (1) emissions and changes in atmospheric concentrations cointegrate on the historical sample 1959--2022 with a constant regression coefficient, motivating the model choice for our theoretical studies, (2) the regression errors appear Gaussian, (3) the regression-based estimator has lower standard errors on the historical sample and in simulations than the ratio-based estimator, and (4) the findings are qualitatively the same on a subsample of the last 31 years. Further, we have provided evidence that measurement error in the two time series is not driving the results.

The main advantage of the regression-based estimator for a historical sample analysis is increased precision. In cases of net-zero emission scenarios and of future data analysis  with emissions approaching zero, it will remain valid, unlike the ratio-based estimator.  To illustrate this advantage, we have simulated trajectories for emissions and changes in atmospheric concentrations over the period 2023--2100 consistent with SSP scenarios using the MAGICC reduced-complexity climate model.
We regard this development as a first step and consider the use of climate projections from the Coupled Model Intercomparison Project (CMIP) as next steps in our research agenda.

\begin{table}[ht]
\centering
\begin{tabularx}{\textwidth}{|X|X|X|}
\hline
\textbf{Test} & \textbf{Hypothesis of interest} & \textbf{Results} \\
\hline
Dickey-Fuller \citep[][]{dickey1979distribution}& Unit root in $G_t$, $E_t$ & Confirmed \\
\hline
Engle-Granger  \citep{EG1987}  & Cointegration of $G_t$, $E_t$ & Confirmed; AF constant on historical sample \\
\hline
Jarque-Bera \citep{JarqueBera1987} & Normality of $u_t^{(2)}$ in $G_t = \alpha E_t + u_t^{(2)}$. & Confirmed \\
\hline
Sample split (1959--2022 vs. 1992--2022)& Robustness of results & Confirmed; AF constant on historical sample \\
\hline
Deming  \citep[][]{Deming1943} & Robustness to measurement error in variables & Confirmed \\
\hline
\end{tabularx}
\caption{Summary of statistical tests}
\label{tab:summaryTests}
\end{table}

\section*{Acknowledgements}
We thank Morten \O. Nielsen for helpful discussions regarding convergence of stochastic processes. M.B. gratefully acknowledges funding from the Independent Research Fund Denmark under grant no. 0219-00001B.

\section*{Author contributions}
All authors contributed equally to the paper.

\section*{Data availability}
Data are publicly available and can be found at 

\noindent \url{https://github.com/mbennedsen/Regression-Approach-to-CO2-Airborne-Fraction}.

\section*{Code availability}
MATLAB code for replication of all results in the main paper and Supplementary Information can be found at \url{https://github.com/mbennedsen/Regression-Approach-to-CO2-Airborne-Fraction}.

%

 {\small 
\bibliographystyle{chicago}
\bibliography{bhk_references}

\begin{thebibliography}{}

\bibitem[\protect\citeauthoryear{Ammann, Meehl, Washington, and Zender}{Ammann
  et~al.}{2003}]{Ammann2003}
Ammann, C.~M., G.~A. Meehl, W.~M. Washington, and C.~S. Zender (2003).
\newblock A monthly and latitudinally varying volcanic forcing dataset in
  simulations of 20th century climate.
\newblock {\em Geophysical Research Letters\/}~{\em 30\/}(12).

\bibitem[\protect\citeauthoryear{Bacastow and Keeling}{Bacastow and
  Keeling}{1973}]{BK1973}
Bacastow, R. and C.~D. Keeling (1973).
\newblock Atmospheric carbon dioxide and radiocarbon in the natural cycle: Ii.
  changes from {A. D.} 1700 to 2070 as deduced from a geochemical model.
\newblock In {\em Carbon and the biosphere conference proceedings; Upton, New
  York, USA}.

\bibitem[\protect\citeauthoryear{Ballantyne, Alden, Miller, Tans, and
  White}{Ballantyne et~al.}{2012}]{Ballantyne2012}
Ballantyne, A.~P., C.~B. Alden, J.~B. Miller, P.~P. Tans, and J.~W.~C. White
  (2012).
\newblock Increase in observed net carbon dioxide uptake by land and oceans
  during the past 50 years.
\newblock {\em Nature\/}~{\em 488\/}(7409), 70--72.

\bibitem[\protect\citeauthoryear{Ballantyne, Andres, Houghton, Stocker,
  Wanninkhof, Anderegg, Cooper, DeGrandpre, Tans, Miller, Alden, and
  White}{Ballantyne et~al.}{2015}]{Ballantyne2015}
Ballantyne, A.~P., R.~Andres, R.~Houghton, B.~D. Stocker, R.~Wanninkhof,
  W.~Anderegg, L.~A. Cooper, M.~DeGrandpre, P.~P. Tans, J.~B. Miller, C.~Alden,
  and J.~W.~C. White (2015).
\newblock Audit of the global carbon budget: estimate errors and their impact
  on uptake uncertainty.
\newblock {\em Biogeosciences\/}~{\em 12\/}(8), 2565--2584.

\bibitem[\protect\citeauthoryear{Bennedsen, Hillebrand, and Koopman}{Bennedsen
  et~al.}{2019}]{BHK2019a}
Bennedsen, M., E.~Hillebrand, and S.~J. Koopman (2019).
\newblock Trend analysis of the airborne fraction and sink rate of
  anthropogenically released {CO}$_2$.
\newblock {\em Biogeosciences\/}~{\em 16\/}(18), 3651--3663.

\bibitem[\protect\citeauthoryear{Bennedsen, Hillebrand, and Koopman}{Bennedsen
  et~al.}{2023a}]{BHK2023b}
Bennedsen, M., E.~Hillebrand, and S.~J. Koopman (2023a).
\newblock A multivariate dynamic statistical model of the global carbon budget
  1959--2020.
\newblock {\em Journal of the Royal Statistical Society Series A: Statistics in
  Society\/}~{\em 186\/}(1), 20--42.

\bibitem[\protect\citeauthoryear{Bennedsen, Hillebrand, and Koopman}{Bennedsen
  et~al.}{2023b}]{BHK2023a}
Bennedsen, M., E.~Hillebrand, and S.~J. Koopman (2023b).
\newblock On the evidence of a trend in the {CO}2 airborne fraction.
\newblock {\em Nature\/}~{\em 616\/}(7956), E1--E3.

\bibitem[\protect\citeauthoryear{Betts, Jones, Knight, Keeling, and
  Kennedy}{Betts et~al.}{2016}]{Betts2016}
Betts, R.~A., C.~D. Jones, J.~R. Knight, R.~F. Keeling, and J.~J. Kennedy
  (2016).
\newblock {El Ni{\~n}o and a record CO2 rise}.
\newblock {\em Nature Climate Change\/}~{\em 6\/}(9), 806--810.

\bibitem[\protect\citeauthoryear{Canadell, Monteiro, Costa, da~Cunha, Cox,
  Eliseev, Henson, Ishii, Jaccard, Koven, Lohila, Patra, Piao, Rogelj,
  Syampungani, Zaehle, and Zickfeld}{Canadell et~al.}{2021}]{Canadell2021}
Canadell, J., P.~Monteiro, M.~Costa, L.~C. da~Cunha, P.~Cox, A.~Eliseev,
  S.~Henson, M.~Ishii, S.~Jaccard, C.~Koven, A.~Lohila, P.~Patra, S.~Piao,
  J.~Rogelj, S.~Syampungani, S.~Zaehle, and K.~Zickfeld (2021).
\newblock Global carbon and other biogeochemical cycles and feedbacks.
\newblock In V.~Masson-Delmotte, P.~Zhai, A.~Pirani, S.~Connors, C.~P{\'e}an,
  S.~Berger, N.~Caud, Y.~Chen, L.~Goldfarb, M.~Gomis, M.~Huang, K.~Leitzell,
  E.~Lonnoy, J.~Matthews, T.~Maycock, T.~Waterfield, O.~Yelek{\c c}i, R.~Yu,
  and B.~Zhou (Eds.), {\em Climate Change 2021: The Physical Science Basis.
  Contribution of Working Group I to the Sixth Assessment Report of the
  Intergovernmental Panel on Climate Change}, pp.\  673--816. Cambridge
  University Press, Cambridge, United Kingdom and New York, NY, USA.

\bibitem[\protect\citeauthoryear{Canadell, Pataki, Gifford, Houghton, Luo,
  Raupach, Smith, and Steffen}{Canadell et~al.}{2007}]{Canadell2007b}
Canadell, J., D.~Pataki, R.~Gifford, R.~Houghton, Y.~Luo, M.~Raupach, P.~Smith,
  and W.~Steffen (2007).
\newblock Saturation of the terrestrial carbon sink.
\newblock {\em Terrestrial Ecosystems in a Changing World\/}, 59--78.

\bibitem[\protect\citeauthoryear{Canadell, Le~Qu{\'e}r{\'e}, Raupach, Field,
  Buitenhuis, Ciais, Conway, Gillett, Houghton, and Marland}{Canadell
  et~al.}{2007}]{Canadell2007}
Canadell, J.~G., C.~Le~Qu{\'e}r{\'e}, M.~R. Raupach, C.~B. Field, E.~T.
  Buitenhuis, P.~Ciais, T.~J. Conway, N.~P. Gillett, R.~A. Houghton, and
  G.~Marland (2007).
\newblock Contributions to accelerating atmospheric {CO}2 growth from economic
  activity, carbon intensity, and efficiency of natural sinks.
\newblock {\em Proceedings of the National Academy of Sciences\/}~{\em
  104\/}(47), 18866--18870.

\bibitem[\protect\citeauthoryear{Cochran}{Cochran}{1977}]{Cochran1977}
Cochran, W.~G. (1977).
\newblock {\em Sampling Techniques\/} (3 ed.).
\newblock New York: John Wiley.

\bibitem[\protect\citeauthoryear{Davidson and de~Jong}{Davidson and
  de~Jong}{1997}]{Davidson1997}
Davidson, J. and R.~M. de~Jong (1997).
\newblock Strong laws of large numbers for dependent heterogeneous processes: a
  synthesis of recent and new results.
\newblock {\em Econometric Reviews\/}~{\em 16\/}(3), 251--279.

\bibitem[\protect\citeauthoryear{de~Jong}{de~Jong}{1995}]{deJong1995}
de~Jong, R.~M. (1995).
\newblock Laws of large numbers for dependent heterogeneous processes.
\newblock {\em Econometric Theory\/}~{\em 11\/}(2), 347--358.

\bibitem[\protect\citeauthoryear{Deming}{Deming}{1943}]{Deming1943}
Deming, W.~E. (1943).
\newblock {\em Statistical adjustments of data}.
\newblock Wiley New York.

\bibitem[\protect\citeauthoryear{Deng and Wu}{Deng and Wu}{1987}]{DW1987}
Deng, L.-Y. and C.~F.~J. Wu (1987).
\newblock {Estimation of Variance of the Regression Estimator}.
\newblock {\em Journal of the American Statistical Association\/}~{\em
  82\/}(398), 568--576.

\bibitem[\protect\citeauthoryear{Dickey and Fuller}{Dickey and
  Fuller}{1979}]{dickey1979distribution}
Dickey, D.~A. and W.~A. Fuller (1979).
\newblock Distribution of the estimators for autoregressive time series with a
  unit root.
\newblock {\em Journal of the American Statistical Association\/}~{\em
  74\/}(366a), 427--431.

\bibitem[\protect\citeauthoryear{Dur\-bin and Koop\-man}{Dur\-bin and
  Koop\-man}{2012}]{Durbin2012}
Dur\-bin, J. and S.~J. Koop\-man (2012).
\newblock {\em {Time Series Analysis by State Space Methods}\/} (2 ed.).
\newblock Oxford: Oxford University Press.

\bibitem[\protect\citeauthoryear{Efron and Tibshirani}{Efron and
  Tibshirani}{1993}]{Efron1993}
Efron, B. and R.~Tibshirani (1993).
\newblock {\em An Introduction to the Bootstrap}.
\newblock Chapman \& Hall/CRC.

\bibitem[\protect\citeauthoryear{Engle and Granger}{Engle and
  Granger}{1987}]{EG1987}
Engle, R.~F. and C.~W.~J. Granger (1987).
\newblock Co-integration and error correction: Representation, estimation, and
  testing.
\newblock {\em Econometrica\/}~{\em 55\/}(2), 251--276.

\bibitem[\protect\citeauthoryear{Fazekas and Klesov}{Fazekas and
  Klesov}{2001}]{Fazekas2001}
Fazekas, I. and O.~Klesov (2001).
\newblock A general approach to the strong law of large numbers.
\newblock {\em Theory of Probability \& Its Applications\/}~{\em 45\/}(3),
  436--449.

\bibitem[\protect\citeauthoryear{Friedlingstein}{Friedlingstein}{2015}]{Friedlingstein2015}
Friedlingstein, P. (2015).
\newblock Carbon cycle feedbacks and future climate change.
\newblock {\em Philosophical Transactions of the Royal Society A\/}~{\em 373},
  20140421.

\bibitem[\protect\citeauthoryear{Friedlingstein, O'Sullivan, Jones, Andrew,
  Bakker, Hauck, Landsch\"utzer, Le~Qu\'er\'e, Luijkx, Peters, Peters,
  Pongratz, Schwingshackl, Sitch, Canadell, Ciais, Jackson, Alin, Anthoni,
  Barbero, Bates, Becker, Bellouin, Decharme, Bopp, Brasika, Cadule,
  Chamberlain, Chandra, Chau, Chevallier, Chini, Cronin, Dou, Enyo, Evans,
  Falk, Feely, Feng, Ford, Gasser, Ghattas, Gkritzalis, Grassi, Gregor, Gruber,
  G\"urses, Harris, Hefner, Heinke, Houghton, Hurtt, Iida, Ilyina, Jacobson,
  Jain, Jarn\'{\i}kov\'a, Jersild, Jiang, Jin, Joos, Kato, Keeling, Kennedy,
  Klein~Goldewijk, Knauer, Korsbakken, K\"ortzinger, Lan, Lef\`evre, Li, Liu,
  Liu, Ma, Marland, Mayot, McGuire, McKinley, Meyer, Morgan, Munro, Nakaoka,
  Niwa, O'Brien, Olsen, Omar, Ono, Paulsen, Pierrot, Pocock, Poulter, Powis,
  Rehder, Resplandy, Robertson, R\"odenbeck, Rosan, Schwinger, S\'ef\'erian,
  Smallman, Smith, Sospedra-Alfonso, Sun, Sutton, Sweeney, Takao, Tans, Tian,
  Tilbrook, Tsujino, Tubiello, van~der Werf, van Ooijen, Wanninkhof, Watanabe,
  Wimart-Rousseau, Yang, Yang, Yuan, Yue, Zaehle, Zeng, and
  Zheng}{Friedlingstein et~al.}{2023}]{GCB2023}
Friedlingstein, P., M.~O'Sullivan, M.~W. Jones, R.~M. Andrew, D.~C.~E. Bakker,
  J.~Hauck, P.~Landsch\"utzer, C.~Le~Qu\'er\'e, I.~T. Luijkx, G.~P. Peters,
  W.~Peters, J.~Pongratz, C.~Schwingshackl, S.~Sitch, J.~G. Canadell, P.~Ciais,
  R.~B. Jackson, S.~R. Alin, P.~Anthoni, L.~Barbero, N.~R. Bates, M.~Becker,
  N.~Bellouin, B.~Decharme, L.~Bopp, I.~B.~M. Brasika, P.~Cadule, M.~A.
  Chamberlain, N.~Chandra, T.-T.-T. Chau, F.~Chevallier, L.~P. Chini,
  M.~Cronin, X.~Dou, K.~Enyo, W.~Evans, S.~Falk, R.~A. Feely, L.~Feng, D.~J.
  Ford, T.~Gasser, J.~Ghattas, T.~Gkritzalis, G.~Grassi, L.~Gregor, N.~Gruber,
  O.~G\"urses, I.~Harris, M.~Hefner, J.~Heinke, R.~A. Houghton, G.~C. Hurtt,
  Y.~Iida, T.~Ilyina, A.~R. Jacobson, A.~Jain, T.~Jarn\'{\i}kov\'a, A.~Jersild,
  F.~Jiang, Z.~Jin, F.~Joos, E.~Kato, R.~F. Keeling, D.~Kennedy,
  K.~Klein~Goldewijk, J.~Knauer, J.~I. Korsbakken, A.~K\"ortzinger, X.~Lan,
  N.~Lef\`evre, H.~Li, J.~Liu, Z.~Liu, L.~Ma, G.~Marland, N.~Mayot, P.~C.
  McGuire, G.~A. McKinley, G.~Meyer, E.~J. Morgan, D.~R. Munro, S.-I. Nakaoka,
  Y.~Niwa, K.~M. O'Brien, A.~Olsen, A.~M. Omar, T.~Ono, M.~Paulsen, D.~Pierrot,
  K.~Pocock, B.~Poulter, C.~M. Powis, G.~Rehder, L.~Resplandy, E.~Robertson,
  C.~R\"odenbeck, T.~M. Rosan, J.~Schwinger, R.~S\'ef\'erian, T.~L. Smallman,
  S.~M. Smith, R.~Sospedra-Alfonso, Q.~Sun, A.~J. Sutton, C.~Sweeney, S.~Takao,
  P.~P. Tans, H.~Tian, B.~Tilbrook, H.~Tsujino, F.~Tubiello, G.~R. van~der
  Werf, E.~van Ooijen, R.~Wanninkhof, M.~Watanabe, C.~Wimart-Rousseau, D.~Yang,
  X.~Yang, W.~Yuan, X.~Yue, S.~Zaehle, J.~Zeng, and B.~Zheng (2023).
\newblock {Global Carbon Budget 2023}.
\newblock {\em Earth System Science Data\/}~{\em 15\/}(12), 5301--5369.

\bibitem[\protect\citeauthoryear{Gloor, Sarmienti, and Gruber}{Gloor
  et~al.}{2010}]{Gloor2010}
Gloor, M., J.~L. Sarmienti, and N.~Gruber (2010).
\newblock What can be learned about carbon cycle climate feedbacks from the
  {CO2} airborne fraction?
\newblock {\em Atmospheric Chemistry and Physics\/}~{\em 10}, 7739 -- 7751.

\bibitem[\protect\citeauthoryear{Granger and Newbold}{Granger and
  Newbold}{1974}]{Granger1974}
Granger, C. W.~J. and P.~Newbold (1974).
\newblock Spurious regression in econometrics.
\newblock {\em Journal of Econometrics\/}~{\em 2}, 111 -- 120.

\bibitem[\protect\citeauthoryear{Hamilton}{Hamilton}{1994}]{hamilton1994}
Hamilton, J.~D. (1994).
\newblock {\em Time Series Analysis}.
\newblock Princeton University Press.

\bibitem[\protect\citeauthoryear{Hill}{Hill}{1999}]{Hill1999}
Hill, M.~R. (1999).
\newblock Carbon dioxide emissions from the {R}ussian {F}ederation -- problems
  and choices.
\newblock {\em Energy \& Environment\/}~{\em 10\/}(1), 51--78.

\bibitem[\protect\citeauthoryear{Houghton and Castanho}{Houghton and
  Castanho}{2023}]{HC23}
Houghton, R.~A. and A.~Castanho (2023).
\newblock Annual emissions of carbon from land use, land-use change, and
  forestry from 1850 to 2020.
\newblock {\em Earth System Science Data\/}~{\em 15\/}(5), 2025--2054.

\bibitem[\protect\citeauthoryear{Jarque and Bera}{Jarque and
  Bera}{1987}]{JarqueBera1987}
Jarque, C.~M. and A.~K. Bera (1987).
\newblock A test for normality of observations and regression residuals.
\newblock {\em International Statistical Review\/}~{\em 2}, 163--172.

\bibitem[\protect\citeauthoryear{Jones, Robertson, Arora, Friedlingstein,
  Shevliakova, Bopp, Brovkin, Hajima, Kato, Kawamiya, Liddicoat, Lindsay,
  Reick, Roelandt, Segschneider, and Tjiputra}{Jones et~al.}{2013}]{Jones2013}
Jones, C., E.~Robertson, V.~Arora, P.~Friedlingstein, E.~Shevliakova, L.~Bopp,
  V.~Brovkin, T.~Hajima, E.~Kato, M.~Kawamiya, S.~Liddicoat, K.~Lindsay, C.~H.
  Reick, C.~Roelandt, J.~Segschneider, and J.~Tjiputra (2013).
\newblock {Twenty-First-Century Compatible CO2 Emissions and Airborne Fraction
  Simulated by CMIP5 Earth System Models under Four Representative
  Concentration Pathways}.
\newblock {\em Journal of Climate\/}~{\em 26\/}(13), 4398 -- 4413.

\bibitem[\protect\citeauthoryear{Jones, Ciais, Davis, Friedlingstein, Gasser,
  Peters, Rogelj, van Vuuren, Canadell, Cowie, Jackson, Jonas, Kriegler,
  Littleton, Lowe, Milne, Shrestha, Smith, Torvanger, and Wiltshire}{Jones
  et~al.}{2016}]{Jones2016}
Jones, C.~D., P.~Ciais, S.~J. Davis, P.~Friedlingstein, T.~Gasser, G.~P.
  Peters, J.~Rogelj, D.~P. van Vuuren, J.~G. Canadell, A.~Cowie, R.~B. Jackson,
  M.~Jonas, E.~Kriegler, E.~Littleton, J.~A. Lowe, J.~Milne, G.~Shrestha,
  P.~Smith, A.~Torvanger, and A.~Wiltshire (2016).
\newblock {Simulating the Earth system response to negative emissions}.
\newblock {\em Environmental Research Letters\/}~{\em 11\/}(9), 1--11.

\bibitem[\protect\citeauthoryear{Jones, Collins, Cox, and Spall}{Jones
  et~al.}{2001}]{Jones2001}
Jones, C.~D., M.~Collins, P.~M. Cox, and S.~A. Spall (2001).
\newblock {The Carbon Cycle Response to ENSO: A Coupled Climate--Carbon Cycle
  Model Study}.
\newblock {\em Journal of Climate\/}~{\em 14\/}(21), 4113--4129.

\bibitem[\protect\citeauthoryear{Jones and Cox}{Jones and Cox}{2005}]{JC2005}
Jones, C.~D. and P.~M. Cox (2005).
\newblock {On the significance of atmospheric CO2 growth rate anomalies in
  2002--2003}.
\newblock {\em Geophysical Research Letters\/}~{\em 32\/}(14).

\bibitem[\protect\citeauthoryear{Karlin and Taylor}{Karlin and
  Taylor}{1975}]{KarlinTaylor1975}
Karlin, S. and H.~M. Taylor (1975).
\newblock {\em {A First Course in Stochastic Processes}\/} (Second Edition
  ed.).
\newblock Academic Press.

\bibitem[\protect\citeauthoryear{Kaufmann and Stern}{Kaufmann and
  Stern}{2002}]{KS2002}
Kaufmann, R.~K. and D.~I. Stern (2002).
\newblock Cointegration analysis of hemispheric temperature relations.
\newblock {\em Journal of Geophysical Research: Atmospheres\/}~{\em 107\/}(D2).

\bibitem[\protect\citeauthoryear{Keenan, Prentice, Canadell, Williams, Wang,
  Raupach, and Collatz}{Keenan et~al.}{2016}]{keenan2016recent}
Keenan, T.~F., I.~C. Prentice, J.~G. Canadell, C.~A. Williams, H.~Wang,
  M.~Raupach, and G.~J. Collatz (2016).
\newblock Recent pause in the growth rate of atmospheric {CO}2 due to enhanced
  terrestrial carbon uptake.
\newblock {\em Nature communications\/}~{\em 7\/}(1), 1--10.

\bibitem[\protect\citeauthoryear{Klepper and Leamer}{Klepper and
  Leamer}{1984}]{Klepper1984}
Klepper, S. and E.~E. Leamer (1984).
\newblock Consistent sets of estimates for regressions with errors in all
  variables.
\newblock {\em Econometrica\/}~{\em 52\/}(1), 163--183.

\bibitem[\protect\citeauthoryear{Knorr}{Knorr}{2009}]{Knorr2009}
Knorr, W. (2009).
\newblock Is the airborne fraction of anthropogenic {CO}2 emissions increasing?
\newblock {\em Geophysical Research Letters\/}~{\em 36}.

\bibitem[\protect\citeauthoryear{Kuczmaszewska}{Kuczmaszewska}{2005}]{Kuczmaszewska2005}
Kuczmaszewska, A. (2005).
\newblock The strong law of large numbers for dependent random variables.
\newblock {\em Statistics \& Probability Letters\/}~{\em 73\/}(3), 305--314.

\bibitem[\protect\citeauthoryear{Le~Qu{\'e}r{\'e}, Raupach, Canadell, Marland,
  Bopp, Ciais, Conway, Doney, Feely, Foster, Friedlingstein, Gurney, Houghton,
  House, Huntingford, Levy, Lomas, Majkut, Metzl, Ometto, Peters, Prentice,
  Randerson, Running, Sarmiento, Schuster, Sitch, Takahashi, Viovy, van~der
  Werf, and Woodward}{Le~Qu{\'e}r{\'e} et~al.}{2009}]{LeQuere2009}
Le~Qu{\'e}r{\'e}, C., M.~R. Raupach, J.~G. Canadell, G.~Marland, L.~Bopp,
  P.~Ciais, T.~J. Conway, S.~C. Doney, R.~A. Feely, P.~Foster,
  P.~Friedlingstein, K.~Gurney, R.~A. Houghton, J.~I. House, C.~Huntingford,
  P.~E. Levy, M.~R. Lomas, J.~Majkut, N.~Metzl, J.~P. Ometto, G.~P. Peters,
  I.~C. Prentice, J.~T. Randerson, S.~W. Running, J.~L. Sarmiento, U.~Schuster,
  S.~Sitch, T.~Takahashi, N.~Viovy, G.~R. van~der Werf, and F.~I. Woodward
  (2009).
\newblock Trends in the sources and sinks of carbon dioxide.
\newblock {\em Nature Geoscience\/}~{\em 2}, 831 -- 836.

\bibitem[\protect\citeauthoryear{Le~Qu{\'e}r{\'e}, R{\"o}denbeck, Buitenhuis,
  Conway, Langenfelds, Gomez, Labuschagne, Ramonet, Nakazawa, Metzl, Gillett,
  and Heimann}{Le~Qu{\'e}r{\'e} et~al.}{2007}]{LeQuere2007}
Le~Qu{\'e}r{\'e}, C., C.~R{\"o}denbeck, E.~T. Buitenhuis, T.~J. Conway,
  R.~Langenfelds, A.~Gomez, C.~Labuschagne, M.~Ramonet, T.~Nakazawa, N.~Metzl,
  N.~Gillett, and M.~Heimann (2007).
\newblock Saturation of the southern ocean {CO}2 sink due to recent climate
  change.
\newblock {\em Science\/}~{\em 316\/}(5832), 1735--1738.

\bibitem[\protect\citeauthoryear{Liddicoat, Wiltshire, Jones, Arora, Brovkin,
  Cadule, Hajima, Lawrence, Pongratz, Schwinger, S{\'e}f{\'e}rian, Tjiputra,
  and Ziehn}{Liddicoat et~al.}{2021}]{Liddicoat2021}
Liddicoat, S.~K., A.~J. Wiltshire, C.~D. Jones, V.~K. Arora, V.~Brovkin,
  P.~Cadule, T.~Hajima, D.~M. Lawrence, J.~Pongratz, J.~Schwinger,
  R.~S{\'e}f{\'e}rian, J.~F. Tjiputra, and T.~Ziehn (2021).
\newblock {Compatible Fossil Fuel CO2 Emissions in the CMIP6 Earth System
  Models' Historical and Shared Socioeconomic Pathway Experiments of the
  Twenty-First Century}.
\newblock {\em Journal of Climate\/}~{\em 34\/}(8), 2853--2875.

\bibitem[\protect\citeauthoryear{Meinshausen, Raper, and Wigley}{Meinshausen
  et~al.}{2011}]{MAGICC}
Meinshausen, M., S.~C.~B. Raper, and T.~M.~L. Wigley (2011).
\newblock Emulating coupled atmosphere-ocean and carbon cycle models with a
  simpler model, {MAGICC}6 -- {Part} 1: {Model} description and calibration.
\newblock {\em Atmospheric Chemistry and Physics\/}~{\em 11\/}(4), 1417--1456.

\bibitem[\protect\citeauthoryear{Newey and West}{Newey and
  West}{1987}]{newey1987simple}
Newey, W.~K. and K.~D. West (1987).
\newblock A simple, positive semi-definite, heteroskedasticity and
  autocorrelationconsistent covariance matrix.
\newblock {\em Econometrica\/}~{\em 55\/}(394), 703--708.

\bibitem[\protect\citeauthoryear{Park}{Park}{1992}]{Park1992}
Park, J.~Y. (1992).
\newblock Canonical cointegrating regressions.
\newblock {\em Econometrica\/}~{\em 60\/}(1), 119--143.

\bibitem[\protect\citeauthoryear{Phillips and Durlauf}{Phillips and
  Durlauf}{1986}]{Phillips1986a}
Phillips, P. C.~B. and S.~N. Durlauf (1986).
\newblock Multiple time series regression with integrated processes.
\newblock {\em The Review of Economic Studies\/}~{\em 53\/}(4), 473--495.

\bibitem[\protect\citeauthoryear{Phillips and Park}{Phillips and
  Park}{1988}]{PP1998}
Phillips, P. C.~B. and J.~Y. Park (1988).
\newblock Asymptotic equivalence of ordinary least squares and generalized
  least squares in regressions with integrated regressors.
\newblock {\em Journal of the American Statistical Association\/}~{\em
  83\/}(401), 111--115.

\bibitem[\protect\citeauthoryear{Piegorsch and Casella}{Piegorsch and
  Casella}{1985}]{PiegorschCasella1985}
Piegorsch, W.~W. and G.~Casella (1985).
\newblock The existence of the first negative moment.
\newblock {\em The American Statistician\/}~{\em 39\/}(1), 60--62.

\bibitem[\protect\citeauthoryear{Pressburger, Dorheim, Keenan, McJeon, Smith,
  and Bond-Lamberty}{Pressburger et~al.}{2023}]{Pressburger2023}
Pressburger, L., K.~Dorheim, T.~F. Keenan, H.~McJeon, S.~J. Smith, and
  B.~Bond-Lamberty (2023).
\newblock Quantifying airborne fraction trends and the destination of
  anthropogenic {CO2} by tracking carbon flows in a simple climate model.
\newblock {\em Environmental Research Letters\/}~{\em 18\/}(5).

\bibitem[\protect\citeauthoryear{Raupach}{Raupach}{2013}]{Raupach2013}
Raupach, M.~R. (2013).
\newblock The exponential eigenmodes of the carbon-climate sytem, and their
  implications for ratios of responses to forcings.
\newblock {\em Earth System Dynamics\/}~{\em 4}, 31 -- 49.

\bibitem[\protect\citeauthoryear{Raupach, Canadell, and Qu\'er\'e}{Raupach
  et~al.}{2008}]{Raupach2008}
Raupach, M.~R., J.~G. Canadell, and C.~L. Qu\'er\'e (2008).
\newblock Anthropogenic and biophysical contributions to increasing atmospheric
  {CO}2 growth rate and airborne fraction.
\newblock {\em Biogeosciences\/}~{\em 5}, 1601 -- 1613.

\bibitem[\protect\citeauthoryear{Raupach, Gloor, Sarmiento, Canadell,
  Fr\"olicher, Gasser, Houghton, Le~Qu\'er\'e, and Trudinger}{Raupach
  et~al.}{2014}]{Raupach2014}
Raupach, M.~R., M.~Gloor, J.~L. Sarmiento, J.~G. Canadell, T.~L. Fr\"olicher,
  T.~Gasser, R.~A. Houghton, C.~Le~Qu\'er\'e, and C.~M. Trudinger (2014).
\newblock The declining uptake rate of atmospheric {CO}$_{2}$ by land and ocean
  sinks.
\newblock {\em Biogeosciences\/}~{\em 11\/}(13), 3453--3475.

\bibitem[\protect\citeauthoryear{Riahi, Schaeffer, Arango, Calvin, Guivarch,
  Hasegawa, Jiang, Kriegler, Matthews, Peters, Rao, Robertson, Sebbit,
  Steinberger, Tavoni, and van Vuuren}{Riahi et~al.}{2022}]{Riahi2022}
Riahi, K., R.~Schaeffer, J.~Arango, K.~Calvin, C.~Guivarch, T.~Hasegawa,
  K.~Jiang, E.~Kriegler, R.~Matthews, G.~Peters, A.~Rao, S.~Robertson,
  A.~Sebbit, J.~Steinberger, M.~Tavoni, and D.~van Vuuren (2022).
\newblock Mitigation pathways compatible with long-term goals.
\newblock In P.~Shukla, J.~Skea, R.~Slade, A.~A. Khourdajie, R.~van Diemen,
  D.~McCollum, M.~Pathak, S.~Some, P.~Vyas, R.~Fradera, P.~Shukla, J.~Skea,
  R.~Slade, A.~A. Khourdajie, R.~van Diemen, D.~McCollum, M.~Pathak, S.~Some,
  P.~Vyas, R.~Fradera, M.~Belkacemi, A.~Hasija, G.~Lisboa, S.~Luz, and
  J.~Malley (Eds.), {\em IPCC, 2022: Climate Change 2022: Mitigation of Climate
  Change. Contribution of Working Group III to the Sixth Assessment Report of
  the Intergovernmental Panel on Climate Change}. Cambridge University Press,
  Cambridge, UK and New York.

\bibitem[\protect\citeauthoryear{Riahi, {van Vuuren}, Kriegler, Edmonds,
  O'Neill, Fujimori, Bauer, Calvin, Dellink, Fricko, Lutz, Popp, Cuaresma, KC,
  Leimbach, Jiang, Kram, Rao, Emmerling, Ebi, Hasegawa, Havlik, Humpen{\"o}der,
  {Da Silva}, Smith, Stehfest, Bosetti, Eom, Gernaat, Masui, Rogelj, Strefler,
  Drouet, Krey, Luderer, Harmsen, Takahashi, Baumstark, Doelman, Kainuma,
  Klimont, Marangoni, Lotze-Campen, Obersteiner, Tabeau, and Tavoni}{Riahi
  et~al.}{2017}]{Riahi2017}
Riahi, K., D.~P. {van Vuuren}, E.~Kriegler, J.~Edmonds, B.~C. O'Neill,
  S.~Fujimori, N.~Bauer, K.~Calvin, R.~Dellink, O.~Fricko, W.~Lutz, A.~Popp,
  J.~C. Cuaresma, S.~KC, M.~Leimbach, L.~Jiang, T.~Kram, S.~Rao, J.~Emmerling,
  K.~Ebi, T.~Hasegawa, P.~Havlik, F.~Humpen{\"o}der, L.~A. {Da Silva},
  S.~Smith, E.~Stehfest, V.~Bosetti, J.~Eom, D.~Gernaat, T.~Masui, J.~Rogelj,
  J.~Strefler, L.~Drouet, V.~Krey, G.~Luderer, M.~Harmsen, K.~Takahashi,
  L.~Baumstark, J.~C. Doelman, M.~Kainuma, Z.~Klimont, G.~Marangoni,
  H.~Lotze-Campen, M.~Obersteiner, A.~Tabeau, and M.~Tavoni (2017).
\newblock {The Shared Socioeconomic Pathways and their energy, land use, and
  greenhouse gas emissions implications: An overview}.
\newblock {\em Global Environmental Change\/}~{\em 42}, 153--168.

\bibitem[\protect\citeauthoryear{Rockstr{\"o}m, Gaffney, Rogelj, Meinshausen,
  Nakicenovic, and Schellnhuber}{Rockstr{\"o}m et~al.}{2017}]{Rockstrom2017}
Rockstr{\"o}m, J., O.~Gaffney, J.~Rogelj, M.~Meinshausen, N.~Nakicenovic, and
  H.~J. Schellnhuber (2017).
\newblock A roadmap for rapid decarbonization.
\newblock {\em Science\/}~{\em 355\/}(6331), 1269--1271.

\bibitem[\protect\citeauthoryear{Roe and Baker}{Roe and
  Baker}{2007}]{RoeBaker2007}
Roe, G.~H. and M.~B. Baker (2007).
\newblock Why is climate sensitivity so unpredictable?
\newblock {\em Science\/}~{\em 318\/}(5850), 629--632.

\bibitem[\protect\citeauthoryear{Schmith, Johansen, and Thejll}{Schmith
  et~al.}{2012}]{JohansenJoC2012}
Schmith, T., S.~Johansen, and P.~Thejll (2012).
\newblock Statistical analysis of global surface temperature and sea level
  using cointegration methods.
\newblock {\em Journal of Climate\/}~{\em 25\/}(22), 7822--7833.

\bibitem[\protect\citeauthoryear{Siegenthaler and Oeschger}{Siegenthaler and
  Oeschger}{1978}]{Siegenthaler1978}
Siegenthaler, U. and H.~Oeschger (1978).
\newblock Predicting future atmospheric carbon dioxide levels.
\newblock {\em Science\/}~{\em 199\/}(4327), 388--395.

\bibitem[\protect\citeauthoryear{van Marle, van Wees, Houghton, Field,
  Verbesselt, and van~der Werf}{van Marle et~al.}{2022}]{vMa2023}
van Marle, M. J.~E., D.~van Wees, R.~A. Houghton, R.~D. Field, J.~Verbesselt,
  and G.~R. van~der Werf (2022).
\newblock New land-use-change emissions suggest a declining {CO2} airborne
  fraction.
\newblock {\em Nature\/}~{\em 603}, 450--454.

\end{thebibliography}
}

    \newpage \clearpage
    
    \setcounter{section}{0}
    \renewcommand{\thesection}{S\arabic{section}}
    \renewcommand{\thesubsection}{S\arabic{section}.\arabic{subsection}}
    \setcounter{equation}{0}
    \setcounter{figure}{0}
    \setcounter{table}{0}
    \setcounter{page}{1}
    \makeatletter
    \numberwithin{equation}{subsection}
    
    \renewcommand{\theequation}{S\arabic{section}.\arabic{equation}}
    \renewcommand{\thefigure}{S\arabic{figure}}
    \renewcommand{\thetable}{S\arabic{table}}
    \renewcommand{\theproposition}{S\arabic{proposition}}
    \renewcommand{\theremark}{S\arabic{remark}}
    
    \begin{center}
        \textbf{\large Supplementary Information \\ A Regression-based Approach to the CO$_2$ Airborne Fraction: \\ Enhancing Statistical Precision and Tackling Zero Emissions}
    \end{center}

%

\section{Theoretical analysis of the two estimators} \label{app:aTheory}
Consider the cointegrated system
\begin{align*}
y_t &= \alpha z_t + u_t, \\
z_t &= z_0 + b t + x_t, \\
x_t &=  \sum_{\tau=1}^t \xi_t, 
\end{align*}
where $z_0, b \in \mathbb{R}$, $u_t$ and $\xi_t$ are covariance-stationary error terms, and $t = 1,2, \ldots, T$. In the context of the analysis of the AF, we have $y_t = G_t$ and $z_t = E_t$. Recall the two estimators of $\alpha$:
\begin{align*}
\hat \alpha_1 &= \frac{1}{T} \sum_{t=1}^T \frac{y_t}{z_t}, \\
\hat \alpha_2 &= \left( \sum_{t=1}^T z_t^2 \right)^{-1} \sum_{t=1}^T z_t y_t.
\end{align*}
The second estimator $\hat \alpha_2$ has been thoroughly studied in the literature on cointegration. If $b = 0$, i.e. if there is no linear trend, then the stochastic trend $x_t$ in $z_t$ implies that $T (\hat \alpha_2 - \alpha)$ has a non-Gaussian limiting distribution \citep{PP1998}. If $b \neq 0$, i.e., if there is a linear trend present in $z_t$, then the linear trend dominates the stochastic trend, which implies that $T^{3/2}(\hat \alpha_2 - \alpha)$ has a Gaussian limit. These results are well-known, see e.g. \cite{Park1992}, but for convenience we state the result relevant for the specific setting in this paper.  

\begin{proposition}[\cite{Park1992}]\label{prop:park}
Let $b \neq 0$ and suppose that $\nu_t = (u_t,\xi_t)'$ is stationary and ergodic with zero-mean. Define $B_n(r) := n^{-1/2} \sum_{i=1}^{\lfloor n r \rfloor} \nu_i$ for $r \in [0,1]$. Suppose that $B_n  \stackrel{d}{\rightarrow} B$  such that $B$ is a Brownian motion with $Var(B_1)$ positive definite. Then
\begin{align*}
T^{3/2}(\hat \alpha_2 - \alpha) \stackrel{d}{\rightarrow} N(0,\sigma_2^2),
\end{align*}
as $T\rightarrow \infty$, where $\sigma^2_2>0$.
\end{proposition}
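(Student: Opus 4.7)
The plan is to write $\hat\alpha_2 - \alpha = \bigl(\sum_{t=1}^T z_t^2\bigr)^{-1}\sum_{t=1}^T z_t u_t$ after substituting $y_t = \alpha z_t + u_t$, and then to study the numerator and denominator of
\[
T^{3/2}(\hat\alpha_2 - \alpha) = \frac{T^{-3/2}\sum_{t=1}^T z_t u_t}{T^{-3}\sum_{t=1}^T z_t^2}
\]
separately, combining them at the end via Slutsky's theorem. Since the result is essentially the one established in \cite{Park1992} for cointegrating regressions with a deterministic trend, the task is to verify that the normalizations are the right ones given the assumption $b\neq 0$ and to identify the pieces whose joint limit yields Gaussianity.

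For the denominator, I would substitute $z_t = z_0 + bt + x_t$ and expand $z_t^2 = b^2 t^2 + 2bt(z_0 + x_t) + (z_0+x_t)^2$. Since $x_t$ is the partial sum of the mean-zero stationary innovations $\xi_t$, the functional central limit theorem implies $T^{-1/2}x_{\lfloor Tr\rfloor} \Rightarrow B^\xi(r)$, so standard bookkeeping gives $T^{-3}\sum t^2 \to 1/3$, $T^{-3}\sum tx_t = O_p(T^{-1/2})$, and $T^{-3}\sum x_t^2 = O_p(T^{-1})$. Hence $T^{-3}\sum z_t^2 \to_p b^2/3$, a nonrandom positive constant that uses the hypothesis $b\neq 0$ in an essential way.

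The delicate step is the numerator. I would split $\sum z_t u_t = z_0\sum u_t + b\sum t u_t + \sum x_t u_t$. The first term is $O_p(T^{1/2})$ by the ordinary CLT, hence negligible after $T^{-3/2}$. The cross term $\sum x_t u_t$ is the familiar I(1)-times-I(0) sum from cointegration theory: under the joint invariance principle $B_n \Rightarrow B$, it is $O_p(T)$ (its normalized limit is $\int_0^1 B^\xi\, dB^u$ plus a long-run covariance bias $\Lambda$), so $T^{-3/2}\sum x_t u_t = O_p(T^{-1/2}) \to 0$. The dominant term is $bT^{-3/2}\sum t u_t$; using Abel summation, $\sum_{t=1}^T t u_t = T S_T - \sum_{t=1}^{T-1} S_t$ with $S_t = \sum_{s\le t}u_s$, and applying $T^{-1/2}S_{\lfloor Tr\rfloor} \Rightarrow B^u(r)$ together with the continuous mapping theorem gives $T^{-3/2}\sum t u_t \Rightarrow B^u(1) - \int_0^1 B^u(r)dr = \int_0^1 r\, dB^u(r)$, a Gaussian random variable with variance $\sigma_u^2 \int_0^1 r^2 dr = \sigma_u^2/3$ where $\sigma_u^2 := \mathrm{Var}(B^u(1))$.

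Combining the pieces via Slutsky yields
\[
T^{3/2}(\hat\alpha_2 - \alpha) \stackrel{d}{\to} \frac{b \int_0^1 r\, dB^u(r)}{b^2/3} = \frac{3}{b}\int_0^1 r\, dB^u(r) \sim N\!\left(0,\, \frac{3\sigma_u^2}{b^2}\right),
\]
which is the claimed result with $\sigma_2^2 = 3\sigma_u^2/b^2 > 0$ (positive because $\mathrm{Var}(B_1)$ is assumed positive definite, so $\sigma_u^2 > 0$). The main obstacle is to justify the weak convergence of the weighted sum $T^{-3/2}\sum t u_t$ together with the negligibility of $T^{-3/2}\sum x_t u_t$; both rely crucially on the joint functional CLT for $\nu_t = (u_t,\xi_t)'$ assumed in the proposition, and without the $b\neq 0$ hypothesis the linear trend would not dominate the stochastic trend $x_t$ in the denominator, so the convergence rate would degrade to $T$ and the limit would no longer be mixed Gaussian.
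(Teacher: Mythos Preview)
Your argument is correct and in fact more informative than what appears in the paper. The paper's proof is a two-line citation: it observes that the assumptions on $\nu_t$ and the condition $b\neq 0$ verify Assumptions~2.1 and~2.2 of \cite{Park1992}, and then invokes Lemma~3.1 and Corollary~3.2 of that reference. No details of the limit theory are worked out.

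You instead reconstruct the argument from first principles: the decomposition of $z_t$ into its deterministic trend and stochastic part, the $T^{-3}\sum z_t^2 \to_p b^2/3$ calculation for the denominator, the Abel-summation representation of $T^{-3/2}\sum t u_t$ and its weak limit $\int_0^1 r\,dB^u(r)$, and the $O_p(T)$ order of the cross term $\sum x_t u_t$. This is exactly the machinery underlying Park's result, so the two proofs are not in conflict; yours simply unpacks what the paper delegates to the reference. The payoff is that you obtain the explicit form $\sigma_2^2 = 3\sigma_u^2/b^2$ of the limiting variance (with $\sigma_u^2$ the long-run variance of $u_t$), which the paper leaves implicit. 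One minor wording issue: in your closing remark, when $b=0$ the limit distribution is in general the non-Gaussian ratio $\int B^\xi\,dB^u / \int (B^\xi)^2$, so ``no longer be mixed Gaussian'' slightly undersells the change; but this does not affect the proof.
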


\begin{proof}
The assumptions made on $\nu_t$ ensure that Assumption 2.1 in \cite{Park1992} is satisfied. Since $b \neq 0$, Assumption 2.2 in \cite{Park1992} also holds. The result now follows from Lemma 3.1 and Corollary 3.2 in \cite{Park1992}.
\end{proof}

\begin{remark}
When $u_t$ is independent and identically distributed (iid), usual least-squares inference is valid, meaning that $Var(\hat \alpha_2)$ can be estimated by the usual least-squares estimator
\begin{align*}
\widehat{Var}(\hat \alpha_2) =  \left( \sum_{t=1}^T z_t^2 \right)^{-1}  T^{-1} \sum_{t=1}^T \left( y_t - \hat \alpha_2 z_t \right)^2.
\end{align*}
If $u_t$ may contain autocorrelation and heteroskedasticity, we can instead estimate $Var(\hat \alpha_2)$ using a heteroskedasticity and autocorrelation consistent (HAC) estimator, which is what we do in this study \citep{newey1987simple}.
\end{remark}

\begin{remark}
The conditions imposed on $\nu_t$ in Proposition \ref{prop:park} are rather weak and are satisfied for a large class of models. For instance, they are clearly satisfied if $u_t$ and $\xi_t$ are independent stationary and ergodic sequences. We refer to \cite{Phillips1986a} and references therein for various specific conditions ensuring the condition on $\nu_t$ holds.
\end{remark}

The properties of the estimator $\hat \alpha_1$ are less well known. Indeed, to the best of our knowledge, the statistical properties of $\hat \alpha_1$ have not been studied previously. To this end, we write
\begin{align*}
\hat \alpha_1 = \frac{1}{T} \sum_{t=1}^T \frac{y_t}{z_t} = \alpha +  \frac{1}{T} \sum_{t=1}^T \frac{u_t}{z_0 + b t + x_t} =  \alpha + T^{-1} \sum_{t=1}^T U_t, 
\end{align*}
where $U_t = u_t/(z_0 + b t + x_t)$. We first show that $\hat \alpha_1$ is a superconsistent estimator of $\alpha$.

\begin{proposition}\label{prop:a1}
Let $b \neq 0$ and assume that $u_t$ is iid with $\E(u_1) = 0$ and $Var(u_1)<\infty$, and that $x_t$ obeys a strong law of large numbers, i.e. $t^{-1}x_t \stackrel{a.s}{\rightarrow} 0$ as $t \rightarrow \infty$. Then, for all $\gamma > 0$, it holds that
\begin{align*}
T^{1-\gamma} (\hat \alpha_1 - \alpha) \stackrel{p}{\rightarrow} 0,
\end{align*}
as $T \rightarrow \infty$. 
\end{proposition}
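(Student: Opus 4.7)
The claim reduces to the stronger statement that $S_T := \sum_{t=1}^T u_t/z_t = O_p(1)$, where $z_t = z_0 + bt + x_t$. Indeed, $\hat \alpha_1 - \alpha = T^{-1} S_T$, so tightness of $S_T$ immediately yields $T^{1-\gamma}(\hat\alpha_1 - \alpha) = T^{-\gamma} S_T \stackrel{p}{\to} 0$ for every $\gamma > 0$, which is the claim. The plan is to verify $S_T = O_p(1)$ by combining the SLLN hypothesis (to control the denominators) with a Kolmogorov-type convergence argument (to control the weighted sum of the errors).

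First, I would use that $x_t/t \to 0$ a.s. and $b \neq 0$ to deduce $z_t/(bt) \to 1$ a.s. Hence on an event of probability one there exists a random index $T_0(\omega) < \infty$ such that $|z_t| \geq |b|t/2$ for all $t \geq T_0$; in particular $\sum_{t \geq 1} z_t^{-2} < \infty$ almost surely. This is the key deterministic ingredient coming from cointegration: even though $z_t$ is random, its trajectory behaves like the linear trend $bt$ eventually.

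Second, I would establish $S_T = O_p(1)$ by conditioning on the sample path of $x$. In the standard cointegration setup (in which $\{u_t\}$ is independent of the innovations $\{\xi_t\}$ driving $x_t$), the conditional sequence $\{u_t/z_t\}_t$ is a sequence of independent mean-zero random variables with conditional variances $\mathrm{Var}(u_1)/z_t^2$, whose sum is a.s. finite by the previous step. Kolmogorov's convergence theorem, applied conditionally on $x$, then yields almost-sure convergence of $S_T$ to a finite random variable $S_\infty$; integrating out $x$ gives unconditional a.s. convergence, so $S_T = O_p(1)$.

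The main obstacle is the coupling of randomness in the numerators $u_t$ and the denominators $z_t$, together with the absence of a pathwise lower bound on $|z_t|$ for small $t$. The conditioning route above handles this cleanly; if one does not wish to assume $\{u_t\} \perp \{\xi_t\}$, an Abel-summation argument delivers the same bound. Writing $V_t := \sum_{s\leq t} u_s$, one has $V_t = O_p(\sqrt{t})$ by the CLT, and $z_t^{-1} - z_{t+1}^{-1} = (b+\xi_{t+1})/(z_t z_{t+1})$, giving $S_T = V_T/z_T + \sum_{t<T} V_t(z_t^{-1} - z_{t+1}^{-1})$; the boundary term is $O_p(T^{-1/2})$, the generic summand is of order $t^{-3/2}$ for $t \geq T_0(\omega)$, and the first $T_0$ terms contribute an a.s. finite random constant, so the entire expression is $O_p(1)$. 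Either route then yields the proposition.
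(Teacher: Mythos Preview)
Your conditioning route is correct and gives the stronger conclusion $S_T\to S_\infty$ a.s., but it uses an assumption --- independence of $\{u_t\}$ and $\{x_t\}$ --- that the proposition does not make. You flag this yourself, and in the intended cointegration setting it is natural, but the proposition as stated does not include it.

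The paper proceeds differently and avoids that independence assumption entirely. Like you, it first uses the SLLN to obtain a high-probability event on which $|z_t|\ge c\,t$ for $t$ beyond a (deterministic) cutoff $T_3$. The key difference is that the paper then takes absolute values immediately: on that event,
\[
T^{-\gamma}\Bigl|\sum_{t>T_3}\frac{u_t}{z_t}\Bigr|\;\le\;T^{-\gamma}\sum_{t>T_3}\frac{|u_t|}{c\,t},
\]
and the right-hand side depends on the $u_t$ only, not on $x$ or $\xi$. Its first moment is of order $T^{-\gamma}\log T\to 0$, so it vanishes in probability with no hypothesis on the joint law of $(u,\xi)$. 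The price is that the paper obtains only $T^{-\gamma}S_T\stackrel{p}{\to}0$, not $S_T=O_p(1)$: the crude absolute-value bound discards the cancellation your Kolmogorov argument exploits. So the two approaches trade generality of hypotheses against sharpness of conclusion.

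Your Abel-summation fallback, meant to work without independence, has a genuine gap. The increment $z_t^{-1}-z_{t+1}^{-1}=(b+\xi_{t+1})/(z_tz_{t+1})$ is of order $t^{-2}$ only if $|b+\xi_{t+1}|$ is suitably controlled, and the bare hypothesis $t^{-1}x_t\to 0$ a.s.\ says nothing about individual increments $\xi_{t+1}$: it permits heavy-tailed and non-iid $\xi$. Without an additional moment assumption on $\xi$, the claim that the generic summand is ``of order $t^{-3/2}$'' cannot be made rigorous, and the Abel route as sketched does not recover the proposition in its stated generality. The paper's absolute-value argument sidesteps exactly this issue by never needing to look at $\xi_{t+1}$.
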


\begin{proof}
Let $\epsilon, \delta>0$ be given. Suppose, for ease of notation, that $b>0$. The proof when $b<0$ follows similar arguments.  We want to show that
\begin{align}\label{eq:obj}
P(T^{1-\gamma} |\hat \alpha_1 - \alpha| \geq \epsilon) \leq \delta,
\end{align}
for all $T$ sufficiently large.

By assumption $x_t$ obeys a strong law of large numbers. By the continuous mapping theorem, the same holds for $|x_t|$, meaning that $P(A) = 1$, where
\begin{align*}
A = \{ \omega \in \Omega :\lim_{t\rightarrow \infty}  t^{-1} |x_t(\omega)| = 0\}.
\end{align*}
Let $T \geq 1$ and define the random variable 
\begin{align*}
\tilde x_T := \sup_{t \geq T} t^{-1} |x_t|.
\end{align*}
For  all $\omega \in A$, we have $\lim_{t\rightarrow \infty} t^{-1} |x_t (\omega) |= 0$ and thus $\limsup_{t \rightarrow \infty}  t^{-1} |x_t (\omega)| = \lim_{t\rightarrow \infty} t^{-1} |x_t (\omega)| = 0$. By the definition of limes superior, we get
\begin{align*}
\lim_{T\rightarrow \infty} \tilde x_T(\omega) := \lim_{T\rightarrow \infty}  \sup_{t \geq T} t^{-1} |x_t (\omega)|  = \limsup_{t \rightarrow \infty}  t^{-1} |x_t (\omega)| = \lim_{t\rightarrow \infty} t^{-1} |x_t (\omega)| =0.
\end{align*}
Since $P(A) = 1$, we conclude that $\tilde x_T \stackrel{a.s.}{\rightarrow} 0$ as $T \rightarrow \infty$, which implies that 
\begin{align}\label{eq:xtilde p}
\sup_{t \geq T} t^{-1} |x_t| = \tilde x_T \stackrel{p}{\rightarrow} 0
\end{align}
as $T \rightarrow \infty$. Let $\nu \in (0,1)$ and $T \geq 1$, and define the set
 \begin{align*}
A_T = \left\{  \sup_{t \geq T} t^{-1} |x_t |  \leq b \nu/2 \right\}.
\end{align*}
The convergence in \eqref{eq:xtilde p} implies that there exists a $T_1 \geq 1$ such that $P(A_T) \geq 1- \delta/2$  for all $T \geq T_1$. Likewise, since $z_0 \in \mathbb{R}$ is a constant, there exists a $T_2 \geq 1$ such that $\frac{|z_0|}{T} \leq b \nu/2$. Set $T_3 = \max (T_1,T_2)$, and write
\begin{align*}
T^{1-\gamma} (\hat \alpha_1 - \alpha) &=  T^{-\gamma} \sum_{t=1}^T \frac{u_t}{z_0 + b t + x_t}  \\
    & = T^{-\gamma} \sum_{t=1}^{T_3} \frac{u_t}{z_0 + b t + x_t} +  T^{-\gamma} \sum_{t=T_3+1}^T \frac{u_t}{z_0 + b t + x_t} \\
    &= T^{-\gamma} B_{1,T_3} + T^{-\gamma} B_{T_3+1,T},
\end{align*}
where $B_{s,r}:= \sum_{t=s}^{r} \frac{u_t}{z_0 + b t + x_t}$. Since $T_3$ is fixed, there are only finitely many terms in $B_{1,T_3}$, and hence $T^{-\gamma} B_{1,T_3}  \stackrel{p}{\rightarrow} 0$ as $T \rightarrow \infty$.  For $B_{T_3+1,T}$, we write
\begin{align*}
 T^{-\gamma} |B_{T_3+1,T}| &= T^{-\gamma} \left| \sum_{t=T_3+1}^T \frac{u_t}{z_0 + b t + x_t} \right| \\
    &= T^{-\gamma} \left|\sum_{t=T_3+1}^T \frac{u_t}{bt} \frac{1}{1 + z_0 b^{-1}t^{-1}  + x_t b^{-1} t^{-1}} \right| \\
        & \leq T^{-\gamma} \sum_{t=T_3+1}^T \frac{|u_t|}{b t} \frac{1}{|1 + z_0 b^{-1}t^{-1}  + x_t b^{-1} t^{-1}|}.
\end{align*}
 Assuming that we are on $A_{T_3}$, we therefore have       
\begin{align*}
         T^{-\gamma} |B_{T_3+1,T}|  \leq   T^{-\gamma} \sum_{t=T_3+1}^T \frac{|u_t|}{b t} \frac{1}{|1 - \nu|}.
\end{align*}
This last term converges to zero in $L^2$. To see this, note that
\begin{align*}
Var\left(    T^{-\gamma/2} \sum_{t=T_3+1}^T \frac{|u_t|}{b t} \frac{1}{|1 - \nu|}  \right) &= T^{-\gamma} \sum_{t=T_3+1}^T \frac{Var(u_t)}{b^2 t^2} \frac{1}{(1 - \nu)^2} \\
&\sim T^{-\gamma} Var(u_1) \sum_{t=T_3+1}^\infty \frac{1}{b^2 t^2} \frac{1}{(1 - \nu)^2}, 
\end{align*}
where the sum converges. This shows that $ T^{-\gamma/2} \sum_{t=T_3+1}^T \frac{|u_t|}{b t} \frac{1}{|1 - \nu|}$ converges in $L^2$ to a constant and thus that $T^{-\gamma} |B_{T_3+1,T}|  \stackrel{L^2}{\rightarrow} 0$  as $T \rightarrow \infty$. We conclude that $ T^{-\gamma} B_{T_3+1,T}  \stackrel{p}{\rightarrow} 0$ as $T \rightarrow \infty$. Since this holds on $A_{T_3}$ and $P(A_{T_3}) \geq 1- \delta/2$, then, by choosing $T \geq T_3$ large enough, \eqref{eq:obj} holds, which completes the proof.

\end{proof}

\begin{remark}
It is straightforward to  relax the assumption that $u_t$ is iid, made in Proposition \ref{prop:a1}. In practice, we use a HAC estimator to estimate the variance of our estimators of $\hat \alpha_2$ to control for possible autocorrelation and heteroskedasticity in $u_t$.
\end{remark}

\begin{remark}
The assumption in Proposition \ref{prop:a1} that $x_t = \sum_{i=1}^t \xi_i$ obeys a strong law of large numbers is satisfied under quite mild conditions. For instance, it is satisfied if $\xi_t$ is iid with finite variance (Kolmogorov's Law of Large Numbers), $\xi_t$ is stationary and ergodic \citep[the Ergodic Theorem, e.g.][Theorem 9.5.5, p. 487]{KarlinTaylor1975}, $\xi_t$ is a mixingale \citep[e.g.][]{deJong1995,Davidson1997},  or  under certain mixing conditions on $\xi_t$ \citep[e.g.][]{Fazekas2001,Kuczmaszewska2005}. 
\end{remark}

Although $\hat \alpha_1$ is a superconsistent estimator of $\alpha$, under what are arguably mild conditions, it is the case that $U_t$, and consequently $\hat \alpha_1$, does not possess any moments. 
This is, for example, the case if the denominator $z_0 + b t + x_t$ is a continuous random variable with positive probability density at zero. In such circumstances, $\E(\hat \alpha_1)$ and $Var(\hat \alpha_1)$ are not well-defined and talking about whether the estimator is, e.g., unbiased is meaningless. 
Although the results of non-existence of moments is well-known  \citep[e.g.][]{PiegorschCasella1985}, it may not be immediately obvious. We therefore state it as a proposition and provide a straightforward proof.
\begin{proposition}\label{prop:moments}
Let $X$ be a random variable with continuous probability density function $f(x)$ such that $f(0)>0$. Then $\E(X^{-\beta})$ does not exist for any $\beta\geq 1$.
\end{proposition}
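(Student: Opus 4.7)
The plan is to show that the integral defining $\E(X^{-\beta})$ is not absolutely convergent, by showing that $\E(|X|^{-\beta}) = \infty$, which is the standard notion of non-existence of the expectation.

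First I would use the continuity of $f$ at the origin together with the assumption $f(0)>0$ to extract a neighborhood $(-\delta,\delta)$ of zero and a constant $c>0$ such that $f(x) \geq c$ for all $x \in (-\delta,\delta)$. This is just the definition of continuity applied with $\epsilon = f(0)/2$, giving $c = f(0)/2$. Second, I would bound the expectation from below using only this neighborhood:
\begin{equation*}
\E(|X|^{-\beta}) \;=\; \int_{-\infty}^{\infty} |x|^{-\beta} f(x)\, dx \;\geq\; \int_{-\delta}^{\delta} |x|^{-\beta} f(x)\, dx \;\geq\; c \int_{-\delta}^{\delta} |x|^{-\beta}\, dx \;=\; 2c \int_{0}^{\delta} x^{-\beta}\, dx.
\end{equation*}
Third, I would invoke the standard fact that $\int_0^\delta x^{-\beta}\, dx$ diverges for every $\beta \geq 1$ (it equals $[\log x]_0^\delta = +\infty$ when $\beta=1$, and $[x^{1-\beta}/(1-\beta)]_0^\delta = +\infty$ when $\beta>1$, since $x^{1-\beta}\to \infty$ as $x\downarrow 0$). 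Hence $\E(|X|^{-\beta}) = +\infty$, which means that $\E(X^{-\beta})$ is not well-defined as a Lebesgue integral.

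The argument is essentially elementary and there is no real obstacle; the only subtlety worth flagging explicitly is that the statement ``does not exist'' should be interpreted as non-integrability (divergence of the absolute moment), since otherwise one could in principle ask about a principal-value interpretation of the signed integral. Writing the bound in terms of $|X|^{-\beta}$ from the outset sidesteps this ambiguity cleanly, and the proof reduces to the one-line observation that a positive density at zero forces enough mass near the singularity of $x^{-\beta}$ to make the integral diverge.
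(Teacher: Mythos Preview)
Your proof is correct and essentially identical to the paper's: both use continuity of $f$ at $0$ to extract a neighborhood on which $f$ is bounded below by a positive constant, then bound $\E(|X|^{-\beta})$ from below by that constant times the divergent integral $\int_0^\delta x^{-\beta}\,dx$. The only cosmetic difference is that the paper works on a one-sided interval $[0,\epsilon]$ and takes $A=\inf_{x\in[0,\epsilon]} f(x)$, whereas you use the symmetric interval $(-\delta,\delta)$ with $c=f(0)/2$ obtained directly from the definition of continuity.
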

\begin{proof}
Note that because $f$ is continuous and has $f(0)>0$, then there exists an $\epsilon>0$ such that $f(x)>0$ for all $x \in [0,\epsilon]$. (If $X$ cannot take positive values, then a similar arguments holds, but for $x \in [-\epsilon,0]$.) Let $A := \inf_{x \in [0,\epsilon]} f(x) > 0$ and write
\begin{align*}
\E(|X|^{-\beta}) = \int_{-\infty}^{\infty} |x|^{-\beta} f(x) dx \geq  \int_{0}^{\epsilon} |x|^{-\beta} f(x) dx \geq A \int_{0}^{\epsilon} x^{-\beta} dx = \infty
\end{align*}
when $\beta \geq 1$.
\end{proof}

\begin{remark}
Statistical issues with studying a ratio with a stochastic variable in the denominator have been encountered before in climate science. \cite{RoeBaker2007} discuss the ``long tails'' of the probability density function (PDF) for the climate sensitivity, i.e. the temperature response to a doubling of atmospheric CO$_2$ over its preindustrial value. They formulate the problem as $\Delta T = \lambda \cdot \Delta R_f$, where $\Delta T$ is the climate sensitivity and $\Delta R_f$ is the change in downward radiative flux, resulting from a doubling of CO$_2$ concentrations. The unknown parameter $\lambda$ is given by $\lambda_0 /(1-f)$, where $\lambda_0$ is a ``reference climate sensitivity'' and $f$ is the ``total feedback factor''. It is clear that if $f$ is a random variable such that the PDF of $f$ is continuous with positive density at $1$, then we are in a situation akin to the one studied in this paper and in Proposition \ref{prop:moments}. In fact, \cite{RoeBaker2007} rely on a number of published studies on the climate sensitivity and feedbacks in the climate system and argue that such a situation is likely. In their calculations, they assume that the distribution of $f$ is Gaussian, which satisfies the conditions of a continuous PDF with positive density at $1$. This leads to very long tails in the PDF of the climate sensitivity $\Delta T$ and, as our Proposition \ref{prop:moments} shows, non-existence of moments.
\end{remark}

In the setting of this paper, we find that the deterministic term $z_0 + b t$ is quite large compared to the stochastic term $x_t$, implying that any probability density of the denominator $z_0 + bt + x_t$ at zero is likely very small. Hence, we proceed to study the estimator $\hat \alpha_1$ under the assumption that $x_t = 0$, i.e. that $z_t = z_0 + bt$ is a deterministic linear time trend. This avoids the problem of dividing by a random variable and thus of having a positive probability density at zero for the term in the denominator. Assume, for simplicity, that $u_t$ is an iid sequence. In this case, we find that
\begin{align*}
\E(\hat \alpha_1) = \alpha,
\end{align*}
and
\begin{align*}
Var(\hat \alpha_1) = T^{-2} \sum_{t=1}^T \E(U_t^2) = T^{-2}  \sigma_u^2 \sum_{t=1}^T \frac{1}{(z_0 + bt)^2} \sim T^{-2} \sigma_u^2 c,
\end{align*}
as $T \rightarrow \infty$, where $c =  \sum_{t=1}^\infty \frac{1}{(z_0 + bt)^2}$ is a finite constant. Hence, in this case, we see that $\hat \alpha_1$ is an unbiased estimator of $\alpha$, converging at the rate $T^{-1}$. However, the estimator does not obey a central limit theorem. To see this, write
\begin{align}\label{eq:CLTa1}
T (\hat \alpha_1 - \alpha ) = \sum_{t=1}^T \frac{u_t}{z_0 + b t},
\end{align}
and observe that $\frac{u_t}{z_0 + b t} \stackrel{p}{\rightarrow} 0$ as $t \rightarrow \infty$, meaning that the first terms in the sum in \eqref{eq:CLTa1} dominate the entire sum, thus making central limit theorem-type results invalid. That is, we cannot conclude that $\hat \alpha_1$ is (asymptotically) Gaussian unless we impose an additional assumption that the error term $u_t$ is itself Gaussian.  For convenience, we again state these results formally.

\begin{proposition}\label{prop:noCLT}
Let $b \neq 0$ and assume that $u_t$ is iid with $\E(u_1) = 0$ and $Var(u_1)<\infty$ and consider the sequence of random variables for $C_T = \sum_{t=1}^T \frac{u_t}{z_0 + bt}$, $T= 1, 2, \ldots$. The following holds:

\begin{enumerate}
\item[(i)] A central limit theorem does not hold in general for $C_T$, i.e. we cannot conclude that $C_T$ has a Gaussian limit as $T \rightarrow \infty$.
\item[(ii)] If $u_t \sim N(0,\sigma_u^2)$ with $\sigma_u^2>0$, then
\begin{align*}
C_T \sim N(0,\sigma_{1,T}^2),
\end{align*}
where $\sigma_{1,T}^2 = \sigma_u^2\sum_{t=1}^T (z_0 + bt)^{-2}$, and, as $T \rightarrow \infty$,
\begin{align*}
C_T  \stackrel{d}{\rightarrow} N(0,\sigma_{1}^2),
\end{align*}
where $\sigma_{1}^2 = \sigma_u^2\sum_{t=1}^\infty (z_0 + bt)^{-2}$.
\end{enumerate}

\end{proposition}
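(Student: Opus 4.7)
The plan is to handle the two parts separately, with both arguments hinging on the fact that the weights $w_t := (z_0 + bt)^{-1}$ are square-summable (by comparison with $\sum t^{-2}$), so that $\sigma_{1,T}^2 = \sigma_u^2 \sum_{t=1}^T w_t^2 \rightarrow \sigma_1^2 := \sigma_u^2 \sum_{t=1}^\infty w_t^2 < \infty$.

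For part (ii), no CLT argument is required: each scaled summand $u_t/(z_0+bt)$ is $N(0, \sigma_u^2 w_t^2)$ and independent across $t$, so $C_T \sim N(0,\sigma_{1,T}^2)$ exactly for every finite $T$. The limiting distribution then follows from L\'evy's continuity theorem, since the characteristic functions $\exp(-s^2 \sigma_{1,T}^2/2)$ converge pointwise to $\exp(-s^2 \sigma_1^2/2)$, which is the characteristic function of $N(0,\sigma_1^2)$.

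For part (i), I would establish the statement by exhibiting a counterexample. First, note that the Lindeberg-Feller condition fails here: since $\max_{t \leq T} \sigma_u^2 w_t^2 = \sigma_u^2 w_1^2$ is constant in $T$ while $s_T^2 = \sigma_u^2 \sum_{t=1}^T w_t^2$ stays bounded above by $\sigma_1^2$, the uniform-negligibility requirement $\max_t \sigma_t^2 / s_T^2 \rightarrow 0$ cannot hold. To convert this obstruction into genuine non-Gaussianity of the limit, I would take $u_t$ iid bounded but non-Gaussian, e.g.\ uniform on $[-1,1]$. Square-summability of $w_t$ makes $\{C_T\}$ Cauchy in $L^2$ (since $\E[(C_T - C_S)^2]$ equals $\sigma_u^2$ times a tail of the convergent series $\sum w_t^2$), so $C_T \rightarrow C_\infty := \sum_{t=1}^\infty u_t w_t$ in $L^2$, hence in distribution. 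The limit decomposes as $C_\infty = u_1 w_1 + R$ with $R$ independent of $u_1$; by Cram\'er's decomposition theorem, a Gaussian $C_\infty$ would force $u_1 w_1$ to be Gaussian, contradicting the choice of $u_t$. Hence $C_T \stackrel{d}{\rightarrow} C_\infty$ with $C_\infty$ non-Gaussian, ruling out any universal CLT for $C_T$.

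The main obstacle I anticipate is the presentation of part (i): the claim is the absence of a Gaussian limit valid for every admissible error distribution, so the counterexample must be made explicit and the invocation of Cram\'er's theorem carefully justified. Everything else is routine once $\sum w_t^2 < \infty$ is established.
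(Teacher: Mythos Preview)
Your proof is correct. Part (ii) is essentially the paper's argument: both compute the characteristic function of $C_T$ and invoke L\'evy's continuity theorem to pass to the limit. For part (i), however, you take a genuinely different route. The paper exhibits a counterexample with $\E(u_1^3) = s \neq 0$, computes $\E(C_T^3) = s\sum_{t=1}^T (z_0+bt)^{-3} \to s c_3 \neq 0$ directly from independence, and concludes that the limit cannot be Gaussian because its third moment is nonzero. Your argument instead uses a \emph{symmetric} bounded counterexample (uniform $u_t$), establishes the $L^2$ limit $C_\infty$, and applies Cram\'er's decomposition theorem to the split $C_\infty = u_1 w_1 + R$. Your route buys two things: it shows that non-Gaussianity of the limit persists even for symmetric error laws, which the skewness argument cannot detect, and it sidesteps the question of whether $\E(C_T^3) \to \E(C_\infty^3)$, a moment-convergence step the paper leaves implicit. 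The paper's route, by contrast, avoids invoking the nontrivial Cram\'er theorem and stays at the level of an elementary moment calculation.
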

\begin{proof}
(i) To see that a central limit theorem cannot hold in general, suppose that $u_t$ is such that $\E(u_t^3) = s \neq 0$. Then
\begin{align*}
\E(C_T^3) =   \sum_{t=1}^T \frac{\E(u_t^3)}{(z_0 + b t)^3} = s \sum_{t=1}^T \frac{1}{(z_0 + b t)^3} \rightarrow s  c_3 \neq 0,
\end{align*}
as $T \rightarrow \infty$, where $c_3 =  \sum_{t=1}^\infty \frac{1}{(z_0 + b t)^3}< \infty$ and $c_3 \neq 0$. This shows that the asymptotic skewness of $C_T$ is not zero, implying that the limiting distribution of $C_T$ is not Gaussian.

(ii) That $C_T \sim N(0,\sigma_{1,T}^2)$ follows trivially by the assumption that $u_t$ is iid Gaussian. The same holds for the second result. Formally, we may deduce it using characteristic functions. For $\theta \in \mathbb{R}$, we have
\begin{align*}
\E\left( e^{i\theta C_T}\right) = \prod_{t=1}^T \E\left(e^{i  \frac{\theta}{z_0 + bt} u_t}\right) = \prod_{t=1}^T e^{-\frac{1}{2}  \frac{\theta^2 \sigma_u^2 }{(z_0 + bt)^2} } =  e^{-\frac{1}{2}  \theta^2 \sigma_u^2 \sum_{t=1}^T \frac{1}{(z_0 + bt)^2} }  \rightarrow  e^{-\frac{1}{2} \theta^2 \sigma_1^2},  
\end{align*}
as $T \rightarrow \infty$, which concludes the proof.
\end{proof}

In the case of the conventional estimator  $\hat \alpha_1$, Proposition \ref{prop:noCLT} shows that if we want to base inference  on the normal distribution, we have to make distributional assumptions on the error term $u_t$. Indeed, if we are willing to assume that $u_t \sim N(0,\sigma_u^2)$, then we get
\begin{align*}
T (\hat \alpha_1 - \alpha ) \sim N(0,\sigma_{1,T}^2),
\end{align*}
and, as $T \rightarrow \infty$,
\begin{align*}
T (\hat \alpha_1 - \alpha ) \stackrel{d}{\rightarrow} N(0,\sigma_{1}^2),
\end{align*}
where $\sigma_{1,T}$ and $\sigma_{1}^2$ are as in Proposition  \ref{prop:noCLT}(ii). Note that if $u_t$ is iid, then $\sigma_{1,T}^2$ and $\sigma_{1}^2$ can be estimated directly from the sample variance of $y_t/z_t$. If $u_t$ may contain autocorrelation and heteroskedasticity, we can instead estimate $\sigma_{1,T}^2$ and $\sigma_{1}^2$ using a HAC estimator, which is what we do in this study.

\section{Further empirical results}\label{app:results}

\subsection{Unit root and cointegration tests}\label{app:unitroot}
Let $y_t$, $t = 1, 2, \ldots, T$, be a time series and $L \geq 0$ an integer. We consider the following model for $y_t$ for the null hypothesis,
\begin{align}\label{eq:adf null}
y_t = a + y_{t-1} +  \sum_{l=1}^L \beta_l \Delta y_{t-l} + \epsilon_t,
\end{align}
where $a,\beta_1,\ldots, \beta_L \in \mathbb{R}$ and $\epsilon_t$ is an error term. We use the convention that $\sum_{l=1}^L = 0$ if $L = 0$. The model for $y_t$ for the alternative hypothesis is
\begin{align}\label{eq:adf alternative}
y_t = b + c\cdot  t + \phi y_{t-1} +  \sum_{l=1}^L \beta_l \Delta y_{t-l} + \epsilon_t,
\end{align}
where $b,c \in \mathbb{R}$ and $\phi \in (-1,1)$.

We use the Augmented Dickey-Fuller test \citep[ADF;][]{dickey1979distribution}  to test the null ($H_0$) that a data series is generated by the model in \eqref{eq:adf null} against the alternative ($H_1$) that the data series is generated by \eqref{eq:adf alternative}. Depending on restrictions made on the three parameters $a,b,c$, this leads to three different tests:
\begin{enumerate}
\item
    AR: $a=b=c=0$. This is the test for $y_t$ having a unit root ($H_0$) against the alternative that $y_t$ is an AR process.
\item
    ARD: $a=c=0$. This is the test for $y_t$ having a unit root ($H_0$) against the alternative that $y_t$ is an AR process around a linear trend (i.e. that $y_t$ is ``trend-stationary'').
\item
    TS: This is the test for $y_t$ having a unit root and a linear trend ($H_0$) against the alternative that $y_t$ is a trend-stationary AR process.
    
\end{enumerate}
We implement the ADF test using the built-in function \texttt{adftest} in the MATLAB Econometrics Toolbox. The $p$-values for testing $H_0$ against $H_1$ for these three tests and the two data series $y_t = G_t$ and $y_t = E_t$ are shown in the top six rows of Table \ref{tab:unitroot}. For $y_t = E_t$ the $p$-values are very large regardless of the particular test conducted (AR, ARD, or TS) and the number of lags $L$ used. Hence, there is strong evidence that $E_t$ contains a stochastic trend. For $y_t = G_t$ the $p$-values are very large for the AR-test, regardless of the number of lags $L$ used. For the ARD-test, the $p$-values are below $10\%$ for $L = 0$ and $L = 1$ and very large for $L \geq 2$. For the TS-test, the $p$-values are below $10\%$ regardless of the number of lags $L$ used. Thus, the AR-test and ARD-test provide evidence that the time series of atmospheric CO$_2$ changes $G_t$ contains a stochastic trend, while the TS-test indicates that the trend might be modelled using a deterministic trend $b + c \cdot  t$. In summary, the ADF tests provide very strong evidence for the presence of trends in $E_t$ and $G_t$ and suggests that both series are non-stationary. 

We next employ the Engle-Granger test \citep{EG1987} to test whether $G_t$ and $E_t$ are cointegrated. Formally, we consider the model
\begin{align}\label{eq:eg null}
G_t = \alpha E_t + u_t
\end{align}
and test whether the error term $u_t$ is stationary. The parameter $\alpha$ is estimated by least-squares regression and the estimated residuals are computed,
\begin{align*}
\hat u_t = G_t - \hat \alpha E_t,
\end{align*}
where $\hat \alpha$ is the least-squares estimate of $\alpha$ obtained from \eqref{eq:eg null}. In a second step, we investigate whether the error terms $u_t$ are stationary by performing the AR-version of the ADF test on the estimated residuals $\hat u_t$. The Engle-Granger test adjusts the critical values of the ADF test to account for the fact that the error term $u_t$ is unobserved and that the  estimated residuals $\hat u_t$ are used in their place as data in the ADF test. We implement the Engle-Granger test using the built-in function \texttt{egcitest} in the MATLAB Econometrics Toolbox. The $p$-values for the Engle-Granger test are shown in the bottom row in Table \ref{tab:unitroot}. The null hypothesis of a unit root in $u_t$ is firmly rejected for all lags $L$ and we can therefore conclude that the two time series are cointegrated.

\subsection{Dynamics of CO$_2$ emissions over the 1959-2022 sample}\label{app:emissions}
The theoretical properties of the estimators of the AF depend on the dynamics of the emissions $E_t$, as discussed in Section \ref{app:aTheory}. Section \ref{app:unitroot} found that emissions are non-stationary. This section further explores the dynamics of the emissions over the sample period 1959--2022.

Figure \ref{fig:dE} motivates a random walk with drift as an appropriate model for emissions. Panel a) shows the differences of yearly CO$_2$ emissions, $\Delta E_t = E_t - E_{t-1}$, for $t=2,\ldots,T$. They are well described by a constant plus a stationary, zero-mean error term, say, $\Delta E_t = b + \xi_t$. Panel b) shows that the autocorrelation function of the differences in emissions is not statistically significant at all lags (except 0), and thus the error term $\xi_t$ is white noise.  Therefore, our model for CO$_2$ emissions $E_t$ is a random walk with drift $b$. We write $E_t = E_{t-1} + b + \xi_t$, or, equivalently, $E_t = E_0 + b\cdot t + x_t$, where $x_t = \sum_{\tau=1}^t \xi_\tau$ is the random walk process $x_t=x_{t-1}+\xi_t$ without drift, and $E_0$ are initial emissions.

The constant $b$ can be estimated directly as the sample mean of $\Delta E_t$, and on the sample 1959--2022, we find $\hat b = 0.1043$ (standard error $0.0241$). This estimate is close to the corresponding estimate found in \cite{BHK2023b} using a dynamic statistical system model for the Global Carbon Budget. A Jarque-Bera test \citep{JarqueBera1987} for normality of the estimated residuals $\hat \xi_t = \Delta E_t - \hat b$ results in a $p$-value below $0.1\%$, meaning that we can reject that $  \xi_t$ is a Gaussian process. 

Figure \ref{fig:dE}a) shows that the variability of the differences in CO$_2$ emissions increases in the early 1990s. This is most likely due to increased variability of emission estimates from land-use and land-cover change, starting in the early 1990s \citep[see Supplementary Information \ref{app:diffdata} and also][]{vMa2023}. In the main paper, we analyze the more recent subsample 1992--2022 as a robustness check, and we find results very similar to those obtained for the full 1959-2022 sample.

\begin{figure}[t!]
    \centering
    \caption{\emph{ Yearly changes in emissions ($\Delta E_t = E_{t} - E_{t-1}$) over the period $1960$--$2022$.}}
    \includegraphics[width=0.99\textwidth]{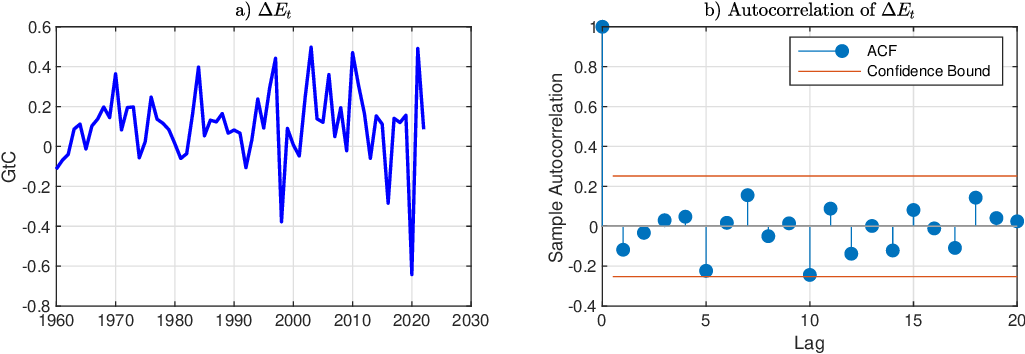}
    \label{fig:dE}
\end{figure}

\subsection{Analysis of alternative data sets}\label{app:diffdata}
We use data on yearly changes in atmospheric CO$_2$ ($G_t$), yearly CO$_2$ emissions from fossil fuels ($E_t^{FF}$), and yearly CO$_2$ emissions from land-use and land cover change ($E_t^{LULCC}$). Total anthropogenic CO$_2$ emissions are then $E_t = E_t^{FF} + E_t^{LULCC}$. The uncertainty in measurements of the AF stems in large part from uncertainties in the magnitude of LULCC emissions \citep{vMa2023}. We therefore follow  \cite{vMa2023} and estimate the AF using three different data sets for LULCC emissions (GCP, H\&C, vMa), while data for atmospheric CO$_2$ changes and CO$_2$ emissions from fossil fuels are obtained from the most recent edition of the Global Carbon Budget data set \citep{GCB2023}.\footnote{Data from the Global Carbon Project can be found at \url{https://www.icos-cp.eu/science-and-impact/global-carbon-budget/2023}, last accessed June 17, 2024. Data for the various LULCC time series can be found at \url{https://doi.org/10.7910/DVN/U7GHRH} and \url{https://github.com/mbennedsen/Regression-Approach-to-CO2-Airborne-Fraction}, last accessed June 17, 2024.} The GCP LULCC data are from the Global Carbon Project \cite{GCB2023}, the H\&C LULCC data are from \cite{HC23}, and the vMa LULCC data are from \cite{vMa2023}. We thus have three different data sets, only differing in the LULCC time series. The data are available for the period $1959$--$2022$ (Figure~\ref{fig:data_all}).

Estimation results for $\alpha$ using the alternative data sets are reported in Table \ref{tab:analysis_all}. For ease of comparison, we also include the results presented in Table \ref{tab:analysis} using the GCP data set. The table shows that the overall estimates of $\alpha$ vary across the three data sets, with the GCP data generally resulting in  lower estimates of the AF than the H\&C and vMa data, a result of the H\&C and vMa LULCC data being significantly below the GCP LULCC data over 1959--2022 (Figure \ref{fig:data_all}). The reduction in uncertainty obtained by using $\hat \alpha_2$ instead of $\hat \alpha_1$ is of the same magnitude across the three data sets, however. 

%

\subsection{Analysis of the subsample 1992--2022}\label{app:analysis1992}
In this section, we analyze the AF over the recent subsample 1992-2022 for all three data sets. The subsample analysis is motivated by the contradictory findings of trends in the AF mentioned in the introduction, the possibility of a structural break in the data around $1990$ \citep[][]{BHK2023a}, substantial disagreements in land-use and land-cover change emissions data before $1992$ \citep[][see also Figure \ref{fig:data_all}]{vMa2023}, and the presumption of better data quality in the latter part of the sample \citep[e.g.][]{Hill1999}. 

The estimation results are shown in Table  \ref{tab:analysis1992}. 
The results show estimates of $\alpha$ that are within the confidence intervals from the full-sample analysis (Table \ref{tab:analysis_all}), although the estimates coming from the most recent sample 1992--2022 are slightly below those from the entire sample 1959--2022.

We find that the regression-based estimator $\hat \alpha_2$ improves the precision, measured by the standard error of the estimator, by approximately $18\%$ compared to the conventionally used ratio-based estimator $\hat \alpha_1$. Including data on ENSO and volcanic activity in the analysis, the regression-based estimator $\hat \alpha_4$ improves the precision with approximately $45\%$, compared to the ratio-based  estimator of the AF $\hat \alpha_1$. Using data from the Global Carbon Project \citep[e.g.][]{GCB2023}, our best estimate of the AF is $46.1\%$ with an associated standard error of $1.0\%$, resulting in a $95\%$ confidence interval of $[44.1\%, 48.2\%]$ for the AF.  As was the case for the full sample studied in Section \ref{app:diffdata}, the alternative LULCC data results in higher estimates of the AF than the LULCC coming from the GCP data. 

\subsection{Accounting for potential measurement errors in the data}\label{app:deming}
It is likely that measurement errors are present in both the left-hand side variable $G_t$ and the right-hand side variable $E_t$ of the regression \eqref{eq:AFnew}. It is well-known that measurement errors may lead to biases in parameters estimates \citep[e.g.][]{Klepper1984}. As a robustness check, we run a so-called \emph{Deming regression} \citep[][]{Deming1943} which is a regression that takes potential measurement errors into account when formulating the regression-based estimator.

Formally, we assume that the data $G_t$ and $E_t$ are noisy measurements of the variables $G_t^\ast$ and $E_t^\ast$, where
\begin{align}\label{eq:demingModel}
G_t^\ast = \alpha E_t^\ast + u_t
\end{align}
and
\begin{align*}
G_t &=  G_t^\ast + \epsilon_{G,t}, \quad \epsilon_{G,t} \stackrel{iid}{\sim} N(0,\sigma_{G,1}^2), \\
E_t &=  E_t^\ast + \epsilon_{E,t}, \quad \epsilon_{E,t} \stackrel{iid}{\sim} N(0,\sigma_{E,1}^2), 
\end{align*}
with $\epsilon_{G,t}$ and $\epsilon_{E,t}$ being statistically independent. Let $\delta = \sigma_{G,1}^2 / \sigma_{E,1}^2$ denote the ratio of the two measurement error variances. Then the Deming regression estimate of $\alpha$ in \eqref{eq:demingModel} using the data $G_t$ and $E_t$ is
\begin{align}\label{eq:deming}
\hat \alpha_{deming} = \frac{M_{GG} - \delta M_{EE} + \sqrt{ (M_{GG} - \delta M_{EE})^2 + 4 \delta M_{EG}^2}}{2 M_{EG}},
\end{align}
where
\begin{align*}
M_{GG}  &= \frac{1}{T} \sum_{t=1}^T G_t^2, \\
M_{EE}  &= \frac{1}{T} \sum_{t=1}^T E_t^2, \\
M_{EG}  &= \frac{1}{T} \sum_{t=1}^T E_t G_t.
\end{align*}

We apply the Deming estimator \eqref{eq:deming} to the three data sets (GCP, H\&C, vMa) and two subsamples (1959--2022 and 1992--2022). Since the measurement error ratio $\delta$ is unknown, we run the regression for several values of $\delta$, namely $\delta \in \{0.2, 0.5, 1, 2, 5\}$. The results are shown in Table \ref{tab:deming}. Comparing with the estimates $\hat \alpha_2$ obtained from Equation \eqref{eq:AFnew} and reported in Tables \ref{tab:analysis_all} and \ref{tab:analysis1992}, we see that 
although the Deming estimates are slightly larger than those obtained from the regression-based estimator $\hat \alpha_2$,  the Deming estimates are  always within the confidence bands of the regression-based estimator $\hat \alpha_2$, suggesting that potential measurement error is not driving the results reported in this study.

\section{Cumulative airborne fraction}\label{app:caf}

Another
alternative to the ratio-based and regression-based estimators of the AF is the \emph{cumulative airborne fraction} (CAF) estimator,
defined as the ratio of the cumulative  changes in atmospheric CO$_2$ concentrations and
the cumulative CO$_2$ emissions, given by
\begin{align}\label{eq:CAFapp}
\text{CAF}_{t_0,t} = \sum _{j=t_0}^t G_j \, \Large{/} \, \sum _{j=t_0}^t E_j,
\end{align}
where $t_0$ denotes the first year of the sample and $t$ is the current year or the last year of the sample.
Using the Global Carbon Project data $1959$--$2022$, we can calculate the CAF, yielding
\[
\text{CAF}_{1959,2022}^{\text{GCP}} = \sum _{j=1959}^{2022} G_j \, \Large{/} \, \sum _{j=1959}^{2022} E_j = \frac{218.03}{491.03} = 44.40\%.
\]
The two alternative data sets, introduced in Section \ref{app:diffdata}, result in
\[
\text{CAF}_{1959,2022}^{\text{H\&C}} = \sum _{j=1959}^{2022} G_j \, \Large{/} \, \sum _{j=1959}^{2022} E_j = \frac{218.03}{457.65} = 47.64\%, 
\]
and
\[
\text{CAF}_{1959,2022}^{\text{vMa}} = \sum _{j=1959}^{2022} G_j \, \Large{/} \, \sum _{j=1959}^{2022} E_j = \frac{218.03}{442.07} = 49.32\%.
\]

Due to its cumulative nature, the CAF is not as useful for studying changes and trends
in the AF, when compared to the ratio- and regression-based measures considered in the
main paper, which utilize the yearly variables $G_t$ and $E_t$.
To illustrate this point, Figure \ref{fig:CAFapp} shows the CAF over the period 2023--2100
using the data from the SSP1-2.6 scenario, i.e. the same data as used in
Figure \ref{fig:SSP126} of the main paper. Panel a) of the figure shows the $CAF_{2023,t}$ $t = 2023, 2024, \ldots, 2100$, as defined in Equation \eqref{eq:CAFapp}.
As can be seen, the CAF is very smooth and generally adjusts slowly to changes in
the behavior of the carbon system. 

A possible remedy to the ``over-smoothness''
is to calculate the CAF over a (smaller) fixed window of length $w \geq 1$.
Consider therefore the alternative CAF, defined as
\begin{align}\label{eq:CAFapp2}
\text{CAF}_{t_0,t}(w) = \text{CAF}_{\max\{t_0-w+1,t_0\},t} = \sum _{j= \max\{t_0-w+1,t_0\}}^t G_j \, \Large{/} \, \sum _{j=\max\{t_0-w+1,t_0\}}^t E_j ,
\end{align}
for a window length $w \geq 1$.
This is the ``moving-window'' version of the CAF, where each data point is calculated
using only the preceding $w$ years (when $t-t_0+1<w$, then $t-t_0+1$ years are used).
For instance, $w=1$ corresponds to the conventional ratio-based AF as considered in
the main paper, while $w = 78$ is the CAF suggested in Equation \eqref{eq:CAFapp}
above (because there are $78$ yearly data points in the 2023--2100 SSP data).
For other values of $w$, but with $1 < w < 78$, the $\text{CAF}(w)$ in
\eqref{eq:CAFapp2} is calculated using the cumulative data using shorter windows.
This resulting CAF may be able to track the changes and trends in the AF better. 

If $w$ is chosen too large, the CAF in \eqref{eq:CAFapp2} will inherit the
downsides of the ``full-sample'' CAF in \eqref{eq:CAFapp}, i.e. it will be slow to adapt.
Conversely, in case $w$ is too small, 
this version of the CAF inherits the downsides of the ratio-based estimator considered
in the main paper since it is still defined as a ratio.
To illustrate these points, Panels b)--f) of Figure \ref{fig:CAFapp}
show $\text{CAF}_{2023,t}(w)$,  $t = 2023, 2024, \ldots, 2100$, for $w = 50,20,10,5,1$,
respectively. In particular,
when
choosing a smaller window size $w$,
the CAF is indeed tracking the changes in the behavior of the carbon system more closely,
while it also has
the effect of making the CAF vulnerable to periods
where $\sum E_t \approx 0$.




%
%

\section{Simulation study: Performance of the estimators in conditions similar to the data sample $1959$--$2022$}\label{app:simstudy} 
We simulate data from the models in \eqref{eq:AForig}--\eqref{eq:AFnew}. To facilitate comparison, we set the data-generating value of $\alpha$ to $0.4386$, which is the estimated value obtained using the model in \eqref{eq:AForig} and the GCP data set, see Table \ref{tab:analysis}. The variance of the error terms are set to $Var(\epsilon_t^{(1)}) = 0.1258^2$ and $Var(u_t^{(2)}) = 0.9088^2$, again corresponding to the estimates from the GCP data (Table \ref{tab:analysis}). Using these values, we simulate data $y_t = G_t/E_t$ from model \eqref{eq:AForig} directly. To simulate from model \eqref{eq:AFnew}, we first need to simulate an emissions trajectory. To this end, we specify emissions as a random walk plus drift, i.e.  $E_t = E_{t-1} + b + \xi_t$, where $b$ is drift parameter, as motivated in this study. 

Using the GCP data for $1959$--$2022$, we estimate $b$ and $Var(\xi_t)$ from the model $\Delta E_t := E_t -E_{t-1} = b + \xi_t$. We also set $E_0 = 4.3433$, which is the first observation of $E_t$ from the GCP data. Then, we simulate emissions time series  using the dynamic equation $E_t  = E_{t-1} + \hat b +\xi_t$, $t = 1,2,\ldots, T$, where the variance of $\xi_t$ is set to the value $\widehat{Var}(\xi_t)$ found on the GCP data.

We simulate $10^6$ instances of $G_t$ and $E_t$ for $t = 1,2,\ldots, T$. For each instance, we estimate $\alpha$ using the estimators $\hat \alpha_1$ and $\hat \alpha_2$, and calculate the squared errors of the resulting estimates. The root mean squared errors (RMSE) of the two estimators are obtained by averaging the squared errors over the $10^6$ simulations. We run these simulation studies for various sample sizes, namely $T = 64, 65, \ldots, 142$. $T = 64$ corresponds to the sample size relevant for the data in the main paper (1959--2022), while $T = 142$ corresponds to data until $2100$ (1959--2100). The RMSEs of the two estimators, as functions of sample size $T$, are shown in the left plot of Figure \ref{fig:simstudy}. The relative RMSE, i.e. RMSE($\hat \alpha_2$)$/$RMSE($\hat \alpha_1$), as a function of $T$, is shown in the right plot of Figure \ref{fig:simstudy}.

From the left plot of Figure \ref{fig:simstudy}, we see that the RMSE of both estimators are decreasing in $T$. The decrease is much more pronounced for the estimator $\hat \alpha_2$ compared to the estimator $\hat \alpha_1$. This is to be expected, since we saw above that $\hat \alpha_2$ converges to $\alpha$ at the rate $T^{3/2}$, while $\hat \alpha_1$ converges at the slower rate $T^{1-\gamma}$, $\gamma>0$. The difference between the performance of the two estimators is illustrated in the right plot of Figure \ref{fig:simstudy},  where the relative RMSE is shown. For $T = 64$, corresponding to the leftmost observation, the RMSE of $\hat \alpha_2$ is approximately $10\%$ lower than the RMSE of $\hat \alpha_1$. These simulation results align with the empirical estimates, where it was found that $\hat \alpha_2$ was approximately $10\%$ more precise than $\hat \alpha_1$ (Tables \ref{tab:analysis} and \ref{tab:analysis_all}). The relative performance continues to diverge as the sample size gets larger; for $T = 142$, the reduction in RMSE is on the order of $50\%$.

\section{Details on the Kalman filter approach to estimating the time-varying airborne fraction}\label{app:KF}
Consider the state-space model
\begin{align*}
G_t &= \alpha_t E_t + u_t^{(2)}, \quad u_t^{(2)} \stackrel{iid}{\sim}N(0,\sigma_u^2),  \\
\alpha_{t+1} &=
    \begin{cases}
      \alpha_{t} + \eta_{t} & \text{ if } t+1 \neq \tau,\\
      1-\alpha_{t} + \eta_{t} & \text{ if } t+1 = \tau,
    \end{cases} , \quad \eta_t\stackrel{iid}{\sim}N(0,\sigma_\eta^2),
\end{align*}
where $E_t$ is treated as a covariate (exogenous variable) and  $\tau$ denotes the year where emissions first turn negative, i.e. $\tau = \inf \{ t: E_t<0\}$. (We use the convention that if $  \{ t: E_t<0\}$ is empty, then $\tau = \infty$.)
This is a random walk model for $\alpha_t$, i.e. $\alpha_{t+1} = \alpha_t + \eta_t$, for all years $t$ except at the year $t+1 = \tau$, where the transition equation becomes $\alpha_{t+1} =  1-\alpha_{t} + \eta_{t}$.\footnote{The  reason for specifying the model in a way that reflects $\alpha_t$ around one at $t=\tau$ is as follows: When $G_t<0$ and $E_t>0$ we expect that  $\alpha_{t}<0$, which means that the carbon sinks absorb more than is emitted. When emissions turn negative, i.e. $E_t<0$, then $\alpha_t$ switches to being positive. Because we still expect that the sinks absorb more than what is emitted, we would in fact expect that $\alpha_t>1$. This features is obtained by the specification $\alpha_{\tau} =  1-\alpha_{\tau-1} + \eta_{\tau}$.}

Given the data for $G_t$, we may use the Kalman filter to evaluate the likelihood of the model. 
The parameters $\sigma_u^2$ and $\sigma_\eta^2$ may be estimated by the method of maximum likelihood, and the Kalman smoother provides an estimate of the time-varying parameter $\alpha_t$, namely $\hat \alpha_{2,t} = \widehat\E(\alpha_t | \text{data})$, where ``data'' denotes $(G_t,E_t)$ for $t =  2023, 2024, \ldots, 2100$. Similarly, the Kalman smoother provides confidence bands around $\hat \alpha_{2,t}$. We refer to \cite{Durbin2012} for a textbook treatment of state space models and the Kalman filter and smoother.


\subsection{Application to SSP scenarios 2023-2100}\label{app:SSP}
Figures \ref{fig:SSP119}--\ref{fig:SSP585} present the results from applying the regression based estimator $\hat \alpha_{2,t} = \widehat\E(\alpha_t | \text{data})$ of the time-varying airborne fraction $\alpha_t$, obtained from the Kalman smoother, to scenarios SSP1-1.9, SSP2-4.5, SSP3-7.0, SSP4-3.4, and SSP5-8.5, respectively. The SSP scenarios were run using the simple climate model MAGICC \citep{MAGICC}.\footnote{The SSP scenarios can be run in MAGICC in a web browser via the link \url{https://live.magicc.org/scenarios/bced417f-0f7f-4bb7-8359-792a0a8b0368/overview}. Last accessed June 17, 2024.}

\newpage

\clearpage

\begin{table}[h!]
\caption{\it $p$-values for the Augmented Dickey-Fuller (ADF) test for a unit root and the Engle-Granger test for cointegration between $G_t$ and $E_t$ on the GCP data (1959--2022), see Section \ref{app:unitroot} for details. First six rows are ADF tests and the last row is the Engle-Granger test. ``AR'': The null model is a unit root model while the alternative model is an AR model. ``ARD'': The null model is a unit root model while the alternative model is an AR model with drift. ``TS'': The null model is a unit root model with a constant (i.e. a drift in levels) while the alternative model is an AR model with a time trend. ``Engle-Granger'': The null hypothesis is that the error term $u_t$ in $G_t = \alpha E_t + u_t$ has a unit root, i.e. that $G_t$ and $E_t$  are not cointegrated. The value ``$L$''  denotes the number of lags included in both the null and alternative models, see equations \eqref{eq:adf null}--\eqref{eq:adf alternative}.}
\vspace{4mm}
\centering
\scriptsize
\begin{tabular}{lcccccc@{}} \hline
 & $L=0$ & $L=1$ & $L=2$ &  $L=3$  & $L=4$ & $L=5$   \\ \hline 
$y_t = G_t$ (AR)   & 0.2349  &  0.4867    &0.6597 &   0.7982 &   0.8403&    0.8470 \\
$y_t = G_t$ (ARD)  &  0.0039  &  0.0778  &  0.3105 &   0.4384    &0.4564 &   0.3895 \\
$y_t = G_t$ (TS)   &  0.0010  &  0.0010&    0.0035 &   0.0192&    0.0231 &   0.0056 \\
$y_t = E_t$ (AR)   & 0.9990 &   0.9990   & 0.9990  &  0.9990 &   0.9953  &  0.9984 \\
$y_t = E_t$ (ARD)  & 0.9488   & 0.9329  &  0.9038  &  0.8533  &  0.8201 &   0.8871 \\
$y_t = E_t$ (TS)  & 0.0933   & 0.2108 &   0.3108  &  0.3339 &   0.2250   & 0.4775 \\
Engle-Granger &      0.0010 &   0.0010   & 0.0019 &   0.0135 &   0.0173   & 0.0038 \\ \hline
\end{tabular}
\label{tab:unitroot}
\end{table}

\newpage

\clearpage

\begin{figure}[t!]
    \centering
    \caption{Data used in the study over the period $1959$--$2022$. The dashed vertical black line denotes the year $1992$, the starting point for the analysis in Section \ref{app:analysis1992}. }
    \includegraphics[width=0.99\textwidth]{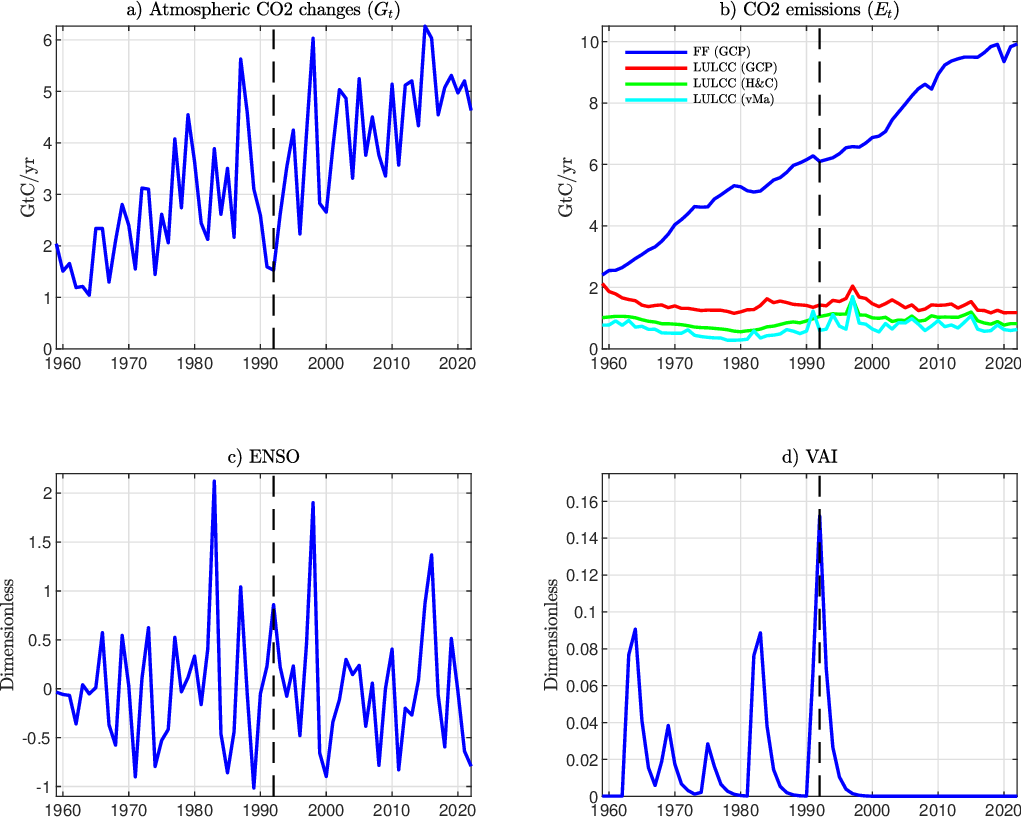}
    \label{fig:data_all}
\end{figure}

\newpage

\clearpage

\begin{table}[h!]
\caption{\it Least-squares regression output of the linear models in Equations \eqref{eq:AForig}--\eqref{eq:AFnew2} for the period 1959--2022 for the three data sets GCP, H\&C, and vMa. ``$\hat \alpha$'' denotes the estimate of the AF $\alpha$,   ``SE($\hat \alpha$)'' denotes the standard error of $\hat \alpha$, and ``Relative SE'' denotes the ratio of the standard error to the standard error from $\hat \alpha_1$, corresponding to the leftmost column. ``$CI_{95\%}(\alpha)$'' denotes a $95\%$ confidence interval for the AF $\alpha$, based on the Gaussian distribution. ``$\widehat{SD}(u_t)$'' denotes the estimated standard deviation of the error term $u_t$ and $R^2$ is the coefficient of determination. ``$\hat \gamma_i$'', $i=1,2$, denotes the estimates of the $\gamma_i$ parameter from Equations \eqref{eq:AForig2}--\eqref{eq:AFnew2}  and ``SD($\hat \gamma_i$)'' is their estimated standard deviation.}
\vspace{4mm}
\centering
\scriptsize
\begin{tabular}{lcccc@{}} \hline
Data: GCP &  \eqref{eq:AForig} &  \eqref{eq:AFnew} &  \eqref{eq:AForig2} &  \eqref{eq:AFnew2}   \\ \hline 
$\hat \alpha$    &                0.4386   & 0.4478    &0.4716    &0.4697 \\
SE($\hat \alpha$)   &             0.0159&    0.0141   & 0.0126  &  0.0105  \\
Relative SE   &     1.0000  &  0.8895   & 0.7904   & 0.6630 \\
 $CI_{95\%}(\alpha)$  &  $ [ 0.4074, 0.4697 ]$ &    $ [0.4201, 0.4755 ]$ &    $ [0.4470, 0.4962]$  &    $ [0.4490, 0.4903]$\\ 
$\widehat{SD}(u_t)$  &             0.1258   & 0.9088   & 0.0881   & 0.6292 \\
$R^2$ &       0  &  0.5863 &   0.5258   & 0.8080 \\ 
 $\hat \gamma_1$ (ENSO)   &  -   &    -   &0.1365  &  1.0025 \\ 
 SD($\hat \gamma_1$)   &      -  &     - &    0.0168 &   0.1141\\ 
  $\hat \gamma_2$  (VAI) &   -    &   -  &  -2.2676 & -15.1482 \\ 
 SD($\hat \gamma_2$)     &      -   &    -   &  0.2935 &   2.3688\\ \hline
 
Data: H\&C &  \eqref{eq:AForig} &  \eqref{eq:AFnew} &  \eqref{eq:AForig2} &  \eqref{eq:AFnew2}  \\ \hline 
$\hat \alpha$    &          0.4763   &   0.4755   &   0.5102   &   0.4960  \\
SE($\hat \alpha$)   &             0.0171   &   0.0151    &  0.0144   &   0.0117 \\
Relative SE    &              1.0000  &    0.8800  &    0.8394   &   0.6814 \\
 $CI_{95\%}(\alpha)$    &        $[ 0.4427 ,0.5099]$    &        $0.4460 ,0.5051]$    &        $[ 0.4820  ,0.5384]$    &        $[ 0.4731, 0.5189]$   \\ 
$\widehat{SD}(u_t)$  &    0.1371  &    0.9148  &    0.0988   &   0.6648 \\
$R^2$ &              0   &   0.5807    &  0.4975    &  0.7856 \\    
   $\hat \gamma_1$ (ENSO)     &     -   &     -   &  0.1468   &   0.9851 \\ 
 SD($\hat \gamma_1$)    &       -  &      - &  0.0205    &  0.1279 \\ 
  $\hat \gamma_2$  (VAI)    &    -     &   - & -2.3289 &   -13.6909 \\ 
 SD($\hat \gamma_2$)       &        -  &      -  &    0.3987    &  3.2461 \\ \hline
 
Data: vMa &  \eqref{eq:AForig} &  \eqref{eq:AFnew} &  \eqref{eq:AForig2}  &  \eqref{eq:AFnew2}  \\ \hline 
$\hat \alpha$    &          0.4957     &  0.4902    &   0.5298     &  0.5096 \\
SE($\hat \alpha$)   &        0.0180      & 0.0155    &   0.0158     &  0.0124  \\
Relative SE      & 1.0000      & 0.8628      & 0.8822      & 0.6935 \\
 $CI_{95\%}(\alpha)$   &  $[0.4605, 0.5309]$    &  $[ 0.4599, 0.5206 ]$    &  $[ 0.4988, 0.5609 ]$    &  $[ 0.4852, 0.5340]$ \\ 
$\widehat{SD}(u_t)$  &           0.1431    &   0.9155      & 0.1059       &0.6872 \\
$R^2$ &     0      & 0.5801     &  0.4704      & 0.7709 \\ 
 $\hat \gamma_1$    (ENSO)     &   - &       -   &    0.1496     &  0.9564 \\ 
 SD($\hat \gamma_1$)       &     -      &  -  &    0.0220   &    0.1328 \\ 
  $\hat \gamma_2$    (VAI)   &   - &       - &    -2.3440    & -12.9004 \\ 
 SD($\hat \gamma_2$)           &   -      &  -   & 0.4594   &    3.5267 \\ \hline
\end{tabular}
\label{tab:analysis_all}
\end{table}

\newpage

\clearpage

\begin{table}[h!]
\caption{\it Least-squares regression output of the linear models in Equations \eqref{eq:AForig}--\eqref{eq:AFnew2} for the period 1992--2022 for the three data sets GCP, H\&C, and vMa. ``$\hat \alpha$'' denotes the estimate of the AF $\alpha$,   ``SE($\hat \alpha$)'' denotes the standard error of $\hat \alpha$, and ``Relative SE'' denotes the ratio of the standard error to the standard error from $\hat \alpha_1$, corresponding to the leftmost column. ``$CI_{95\%}(\alpha)$'' denotes a $95\%$ confidence interval for the AF $\alpha$, based on the Gaussian distribution. ``$\widehat{SD}(u_t)$'' denotes the estimated standard deviation of the error term $u_t$ and $R^2$ is the coefficient of determination. ``$\hat \gamma_i$'', $i=1,2$, denotes the estimates of the $\gamma_i$ parameter from Equations \eqref{eq:AForig2}--\eqref{eq:AFnew2}  and ``SD($\hat \gamma_i$)'' is their estimated standard deviation.}
\vspace{4mm}
\centering
\scriptsize
\begin{tabular}{lcccc@{}} \hline
Data: GCP &  \eqref{eq:AForig} &  \eqref{eq:AFnew} &  \eqref{eq:AForig2} &  \eqref{eq:AFnew2}   \\ \hline 
$\hat \alpha$    &                   0.4456   & 0.4497   & 0.4626  &  0.4613 \\
SE($\hat \alpha$)   &        0.0190  &  0.0157   & 0.0124  &  0.0104 \\
Relative SE   &      1.0000   & 0.8247   & 0.6509   & 0.5464  \\
 $CI_{95\%}(\alpha)$  &  $ [0.4083 ,0.4828 ]$    &  $ [0.4190 ,0.4804]$    &  $ [0.4384 ,0.4869  ]$    &  $ [0.4409,   0.4816]$ \\ 
$\widehat{SD}(u_t)$  &          0.1065 &   0.9309 &   0.0662  &  0.5891  \\
$R^2$   &    0  &  0.3558  &  0.6391   & 0.7592\\ 
 $\hat \gamma_1$    (ENSO)     &   - &       -   &    0.1122    1.0221 \\ 
 SD($\hat \gamma_1$)       &     -      &  -  &   0.0166    0.1186 \\ 
  $\hat \gamma_2$    (VAI)   &   - &       - &   -2.2281  -17.5878 \\ 
 SD($\hat \gamma_2$)           &   -      &  -   &  0.1702    1.2190 \\ \hline
 
Data: H\&C &  \eqref{eq:AForig} &  \eqref{eq:AFnew} &  \eqref{eq:AForig2} &  \eqref{eq:AFnew2}  \\ \hline 
$\hat \alpha$    &            0.4662   & 0.4691   & 0.4841  &  0.4811 \\
SE($\hat \alpha$)   &       0.0203  &  0.0163 &   0.0136   & 0.0111  \\
Relative SE    &          1.0000   & 0.8041    &0.6678   & 0.5478   \\
 $CI_{95\%}(\alpha)$    &        $[ 0.4264, 0.5060  ]$    &        $[0.4371 ,0.5011 ]$    &        $[ 0.4575 ,0.5107]$    &        $[ 0.4593 ,0.5029]$\\ 
$\widehat{SD}(u_t)$  &          0.1132  &  0.9346  &  0.0706 &   0.5921 \\ 
$R^2$     &        0   & 0.3506  &  0.6372   & 0.7567\\
 $\hat \gamma_1$    (ENSO)     &   - &       -   & 0.1203   & 1.0333 \\ 
 SD($\hat \gamma_1$)       &     -      &  -  &     0.0191  &  0.1238\\ 
  $\hat \gamma_2$    (VAI)   &   - &       - &    -2.3345&  -17.4042\\ 
 SD($\hat \gamma_2$)           &   -      &  -   &   0.2018  &  1.3348 \\ \hline

Data: vMa &  \eqref{eq:AForig} &  \eqref{eq:AFnew} &  \eqref{eq:AForig2}  &  \eqref{eq:AFnew2}  \\ \hline 
$\hat \alpha$    &        0.4786     &  0.4815    &  0.4961  &    0.4924  \\
SE($\hat \alpha$)   &        0.0203  &    0.0159   &   0.0141  &    0.0113 \\
Relative SE      &          1.0000   &   0.7865    &  0.6962    &  0.5598  \\
 $CI_{95\%}(\alpha)$   &  $ [   0.4389 ,0.5183 ]$    &  $ [ 0.4503 ,0.5128  ]$    &  $ [ 0.4685,  0.5238]$    &  $ [0.4702 ,0.5146]$ \\ 
$\widehat{SD}(u_t)$  &            0.1146   &   0.9109    &  0.0732  &    0.5914 \\ 
$R^2$     &            0    &  0.3832    &  0.6198   &   0.7573 \\ 
 $\hat \gamma_1$    (ENSO)     &   - &       -   &   0.1213   &   1.0060\\ 
 SD($\hat \gamma_1$)       &     -      &  -  &    0.0207    &  0.1272 \\ 
  $\hat \gamma_2$    (VAI)   &   - &       - &    -2.2999   & -16.1863 \\ 
 SD($\hat \gamma_2$)           &   -      &  -   &  0.2268  &    1.3944 \\ \hline
\end{tabular}
\label{tab:analysis1992}
\end{table}

\newpage

\clearpage

    \begin{table}[t!]

\caption{\it Deming regression estimates \eqref{eq:deming} of the AF $\alpha$ in \eqref{eq:demingModel} for the three data sets GCP, H\&C, and vMa, for the full sample 1959--2022 (left panel) and the recent sample 1992--2022 (right panel). The parameter $\delta$ denotes the measurement error ratio $\delta$, see Section \ref{app:deming} for details.}
\vspace{4mm}
\centering
\scriptsize
\begin{tabular}{lccccc|ccccc@{}}
& \multicolumn{5} {c} {\bfseries Full sample (1959-2022)} & \multicolumn{5} {c} {\bfseries Recent sample (1992-2022)} \\
\cmidrule{2-11}
Data   &  $\delta = 0.2$ & $\delta = 0.5$ & $\delta = 1$ & $\delta = 2$ & $\delta = 5$  &  $\delta = 0.2$ & $\delta = 0.5$ & $\delta = 1$ & $\delta = 2$ & $\delta = 5$  \\ \hline 
GCP  &  0.4623 &    0.4561  &   0.4526  &   0.4504  &   0.4489  &  0.4598  &   0.4555  &   0.4531  &   0.4515  &   0.4504  \\
H\&C &     0.4921 &    0.4852   &  0.4813   &  0.4787   &  0.4769    &  0.4802  &   0.4756   &  0.4729   &  0.4712 &    0.4700\\
vMa  &  0.5078 &    0.5007 &    0.4964   &  0.4936&     0.4917    &  0.4926 &    0.4881  &   0.4854  &   0.4836 &    0.4824 \\ \hline
\end{tabular}
\label{tab:deming}
\end{table}

\newpage

\clearpage

\begin{figure}[t!]
    \centering
    \caption{Cumulative airborne fraction \eqref{eq:CAFapp2} calculated from SSP1-2.6 data over the period $2023$--$2100$. a) $w = 78$, corresponding to the CAF in \eqref{eq:CAFapp}; b) $w= 50$; c) $w = 20$; d) $w=10$; e) $w = 5$; f) $w = 1$, corresponding to the AF ratio-based estimator, denoted by $\widehat \alpha_{1,t}$ in the main paper. See Section \ref{app:caf} for details. \\}
    \includegraphics[width=0.99\textwidth]{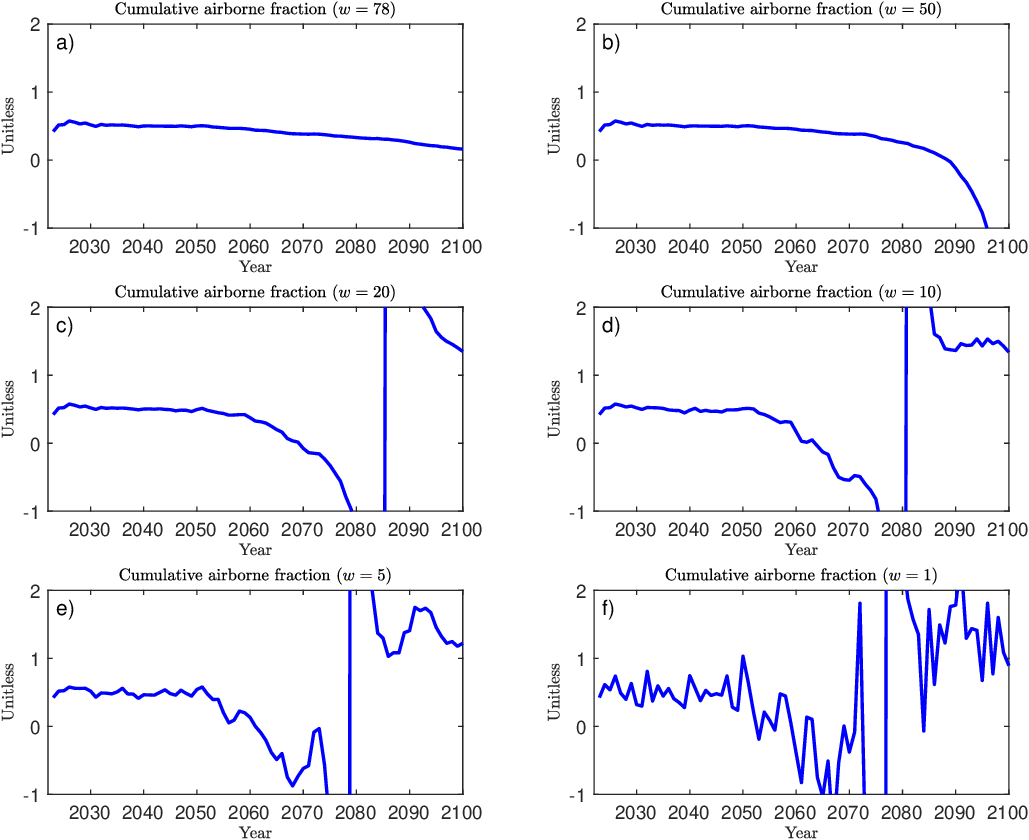}
    \label{fig:CAFapp}
\end{figure}

\newpage

\clearpage

\begin{figure}[t!]
    \centering
    \caption{Simulation study of the properties of the estimators $\hat \alpha_1$ and $\hat \alpha_2$. See Section \ref{app:simstudy} for details. }
    \includegraphics[width=0.99\textwidth]{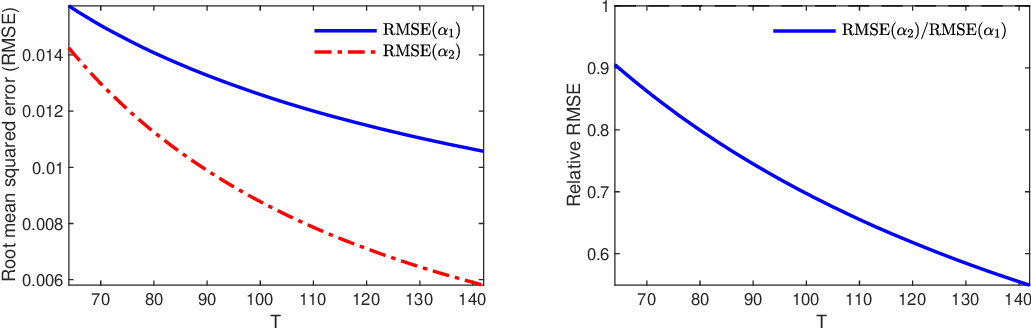}
    \label{fig:simstudy}
\end{figure}

\newpage

\clearpage

\begin{figure}[t!]
    \centering
   \caption{\emph{Analysis of SSP1-1.9 data over the period $2023$--$2100$. a) Atmospheric concentration changes ($G_t$) for the historical period 1959--2022 (black) and the SSP period 2022--2100.  b) Emissions ($E_t$) data. Magenta lines denote the original output data from MAGICC and the blue lines denote the perturbed data. c) Ratio of atmospheric changes to emissions ($G_t/E_t$). Black lines denote data from the Global Carbon Project over the historical period $1959$--$2022$.  Magenta lines denote the original SSP output data from MAGICC over the period $2023$--$2100$. Blue lines denote the perturbed SSP data. The red line in c) is the regression-based estimator $\hat \alpha_{2,t} = \widehat\E(\alpha_t | \text{data})$ of the time-varying airborne fraction $\alpha_t$, obtained from the Kalman smoother. Shaded area is a $95\%$ confidence band around $\hat \alpha_{2,t}$.} }
    \includegraphics[width=0.99\textwidth]{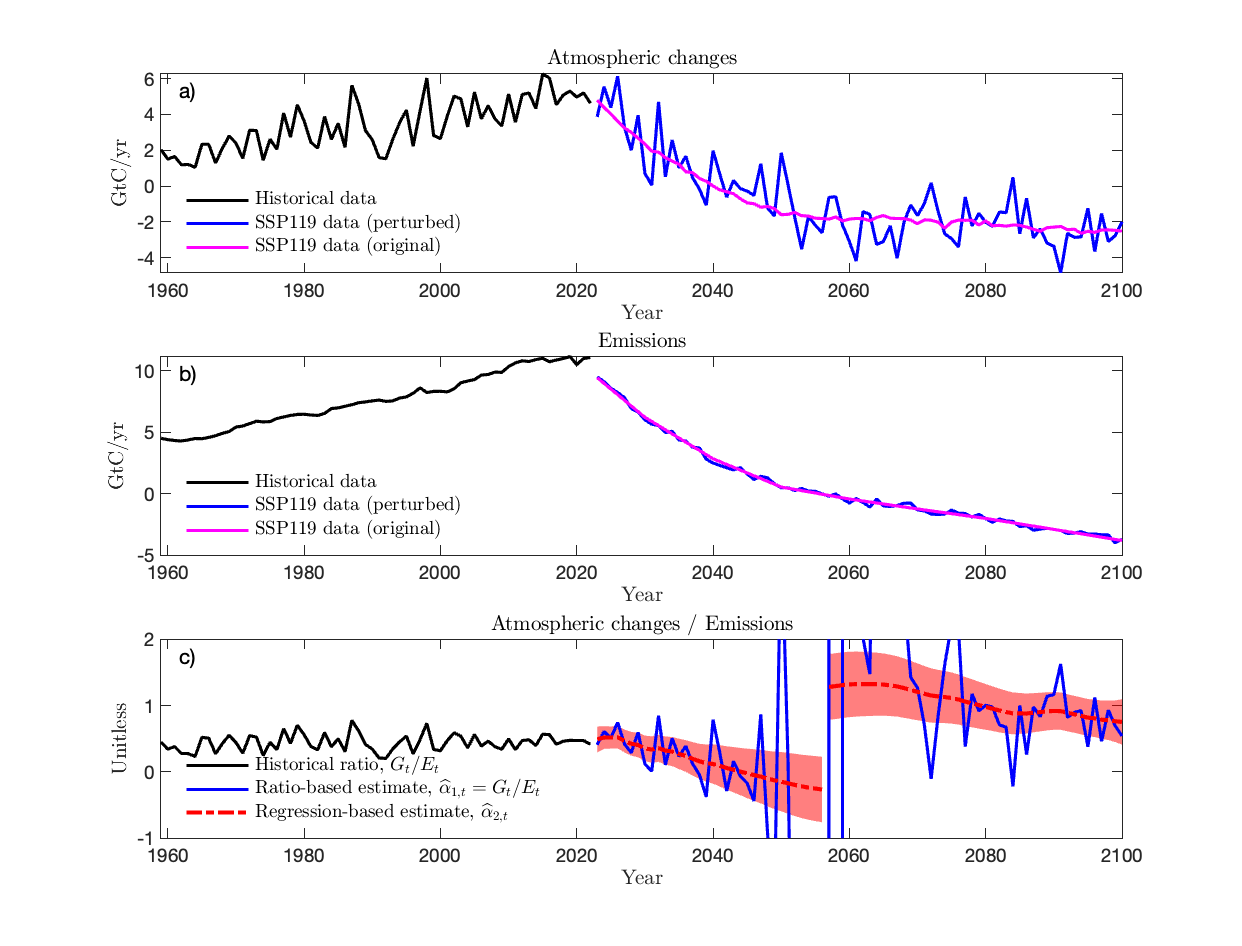}
    \label{fig:SSP119}
\end{figure}

\begin{figure}[t!]
    \centering
   \caption{\emph{Analysis of SSP2-4.5 data over the period $2023$--$2100$. a) Atmospheric concentration changes ($G_t$) for the historical period 1959--2022 (black) and the SSP period 2022--2100.  b) Emissions ($E_t$) data. Magenta lines denote the original output data from MAGICC and the blue lines denote the perturbed data. c) Ratio of atmospheric changes to emissions ($G_t/E_t$). Black lines denote data from the Global Carbon Project over the historical period $1959$--$2022$.  Magenta lines denote the original SSP output data from MAGICC over the period $2023$--$2100$. Blue lines denote the perturbed SSP data. The red line in c) is the regression-based estimator $\hat \alpha_{2,t} = \widehat\E(\alpha_t | \text{data})$ of the time-varying airborne fraction $\alpha_t$, obtained from the Kalman smoother. Shaded area is a $95\%$ confidence band around $\hat \alpha_{2,t}$.} }
    \includegraphics[width=0.99\textwidth]{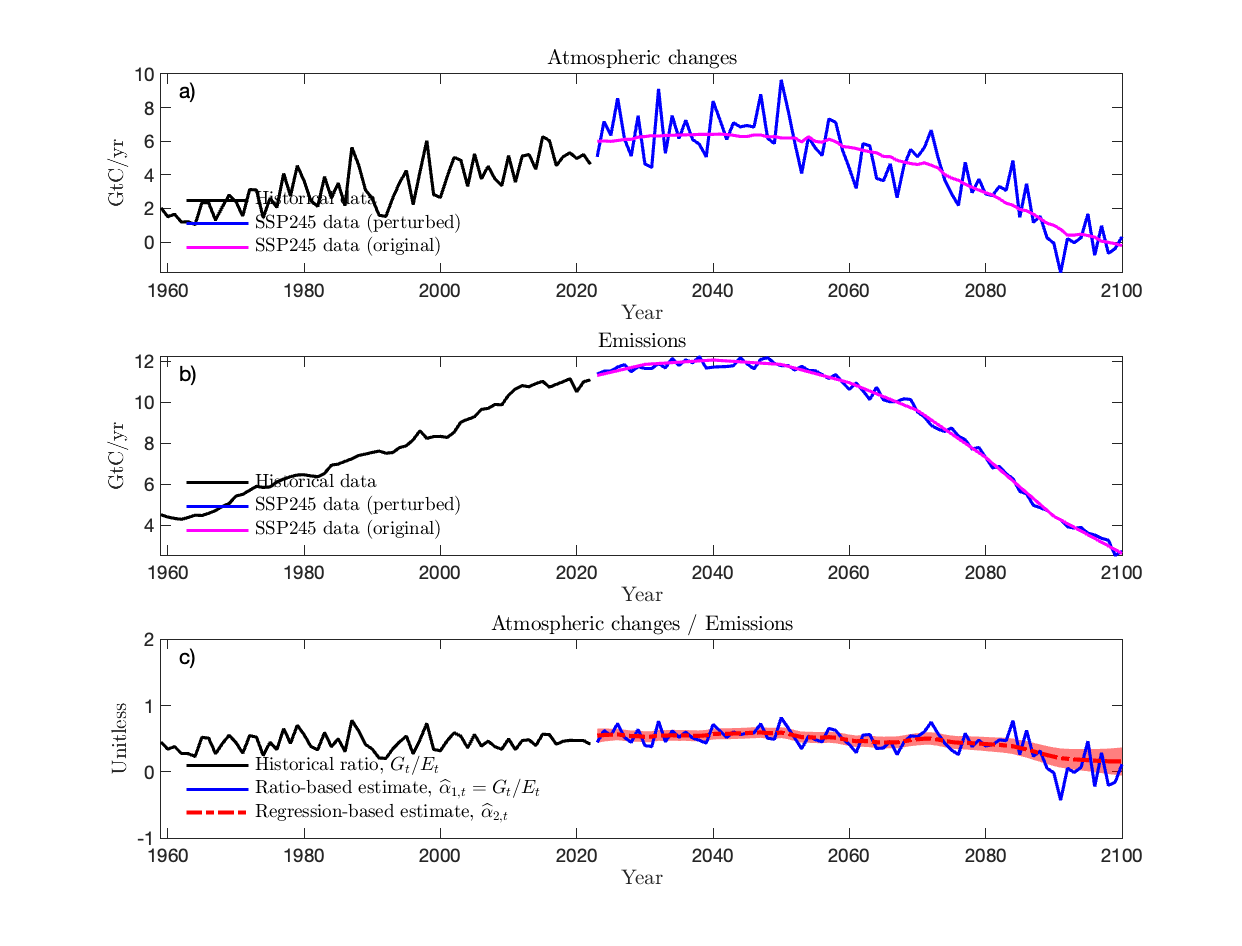}
    \label{fig:SSP245}
\end{figure}

\begin{figure}[t!]
    \centering
   \caption{\emph{Analysis of SSP3-7.0 data over the period $2023$--$2100$. a) Atmospheric concentration changes ($G_t$) for the historical period 1959--2022 (black) and the SSP period 2022--2100.  b) Emissions ($E_t$) data. Magenta lines denote the original output data from MAGICC and the blue lines denote the perturbed data. c) Ratio of atmospheric changes to emissions ($G_t/E_t$). Black lines denote data from the Global Carbon Project over the historical period $1959$--$2022$.  Magenta lines denote the original SSP output data from MAGICC over the period $2023$--$2100$. Blue lines denote the perturbed SSP data. The red line in c) is the regression-based estimator $\hat \alpha_{2,t} = \widehat\E(\alpha_t | \text{data})$ of the time-varying airborne fraction $\alpha_t$, obtained from the Kalman smoother. Shaded area is a $95\%$ confidence band around $\hat \alpha_{2,t}$.} }
    \includegraphics[width=0.99\textwidth]{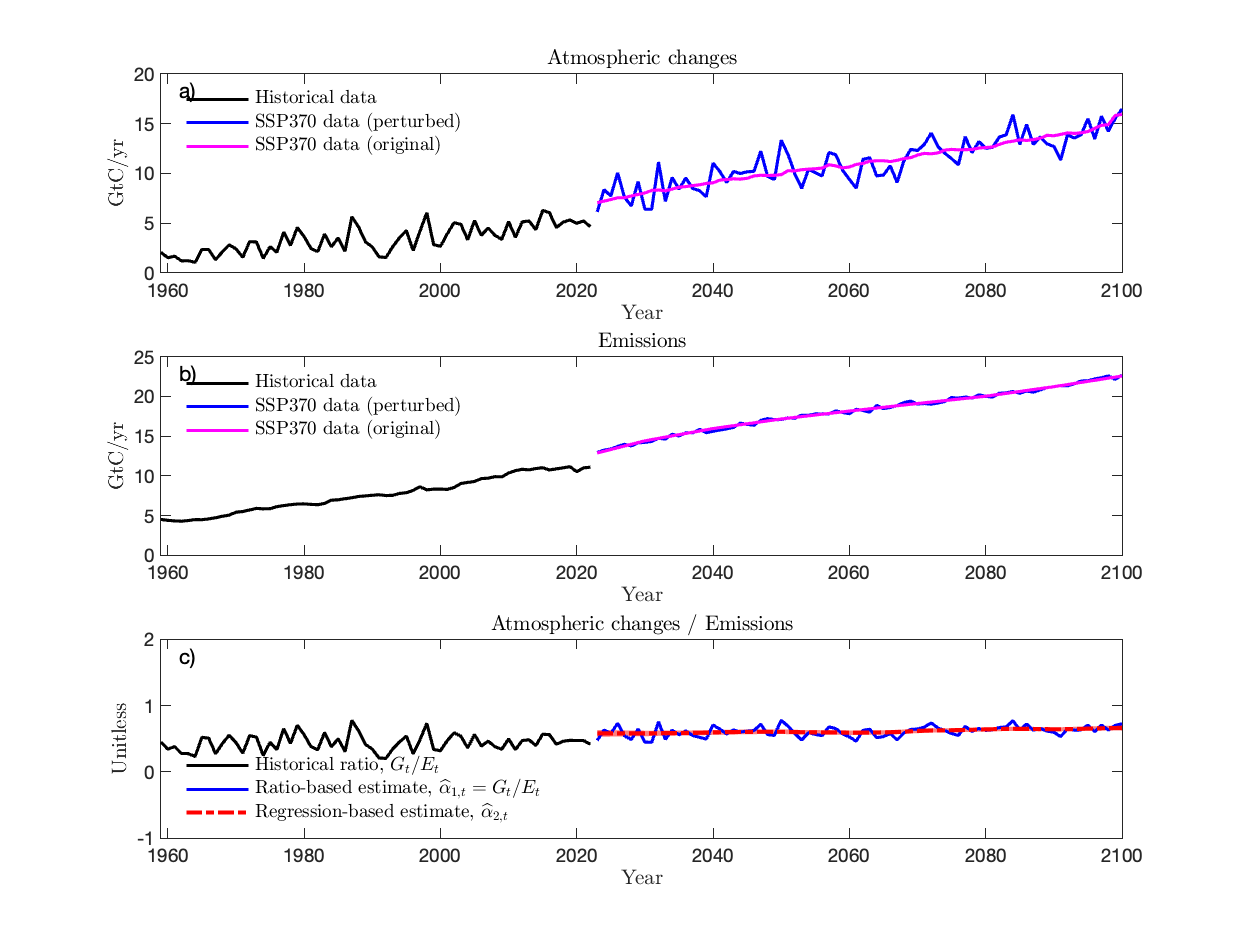}
    \label{fig:SSP370}
\end{figure}

\begin{figure}[t!]
    \centering
   \caption{\emph{Analysis of SSP4-3.4 data over the period $2023$--$2100$. a) Atmospheric concentration changes ($G_t$) for the historical period 1959--2022 (black) and the SSP period 2022--2100.  b) Emissions ($E_t$) data. Magenta lines denote the original output data from MAGICC and the blue lines denote the perturbed data. c) Ratio of atmospheric changes to emissions ($G_t/E_t$). Black lines denote data from the Global Carbon Project over the historical period $1959$--$2022$.  Magenta lines denote the original SSP output data from MAGICC over the period $2023$--$2100$. Blue lines denote the perturbed SSP data. The red line in c) is the regression-based estimator $\hat \alpha_{2,t} = \widehat\E(\alpha_t | \text{data})$ of the time-varying airborne fraction $\alpha_t$, obtained from the Kalman smoother. Shaded area is a $95\%$ confidence band around $\hat \alpha_{2,t}$.} }
    \includegraphics[width=0.99\textwidth]{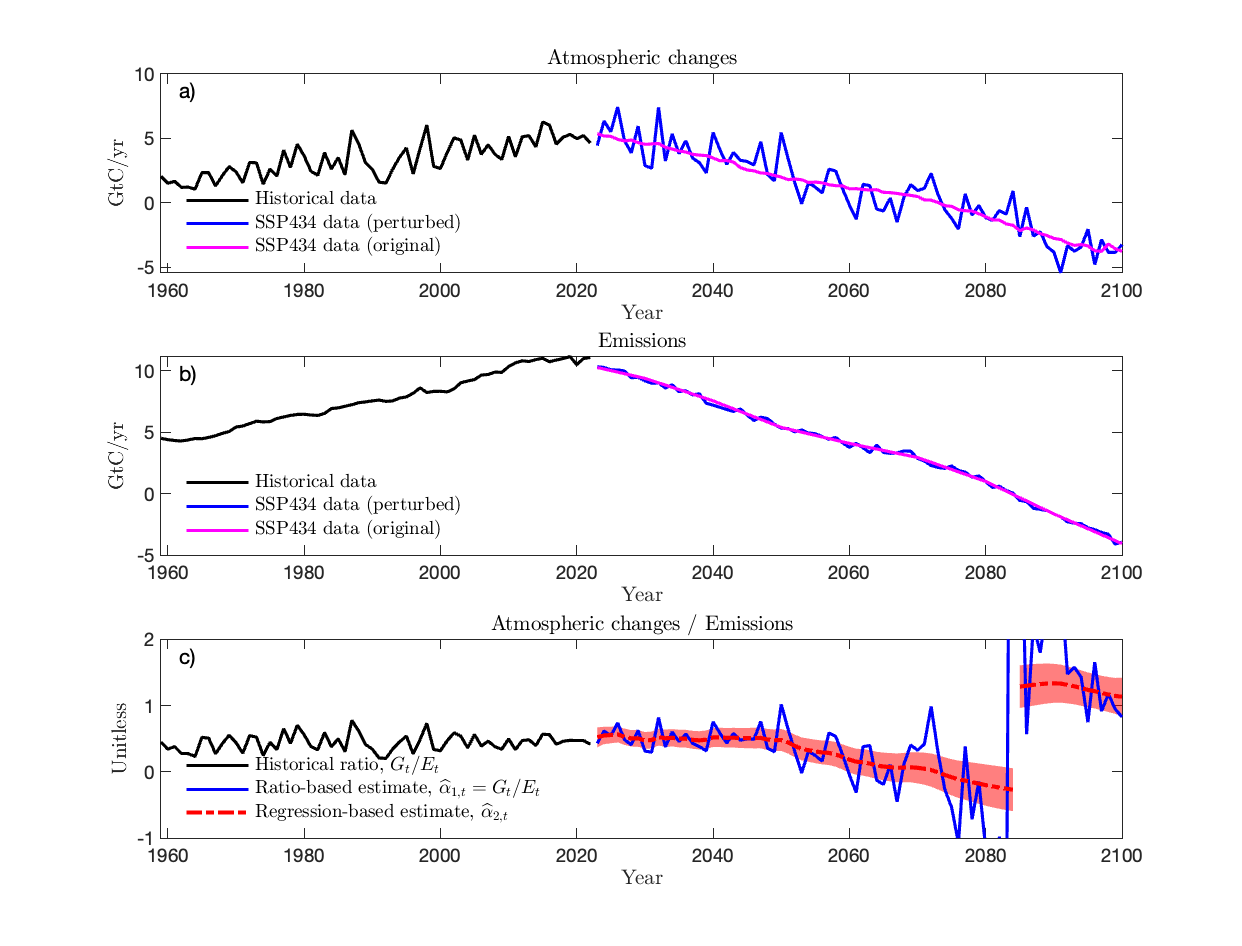}
    \label{fig:SSP434}
\end{figure}

\begin{figure}[t!]
    \centering
   \caption{\emph{Analysis of SSP5-8.5 data over the period $2023$--$2100$. a) Atmospheric concentration changes ($G_t$) for the historical period 1959--2022 (black) and the SSP period 2022--2100.  b) Emissions ($E_t$) data. Magenta lines denote the original output data from MAGICC and the blue lines denote the perturbed data. c) Ratio of atmospheric changes to emissions ($G_t/E_t$). Black lines denote data from the Global Carbon Project over the historical period $1959$--$2022$.  Magenta lines denote the original SSP output data from MAGICC over the period $2023$--$2100$. Blue lines denote the perturbed SSP data. The red line in c) is the regression-based estimator $\hat \alpha_{2,t} = \widehat\E(\alpha_t | \text{data})$ of the time-varying airborne fraction $\alpha_t$, obtained from the Kalman smoother. Shaded area is a $95\%$ confidence band around $\hat \alpha_{2,t}$.} }
    \includegraphics[width=0.99\textwidth]{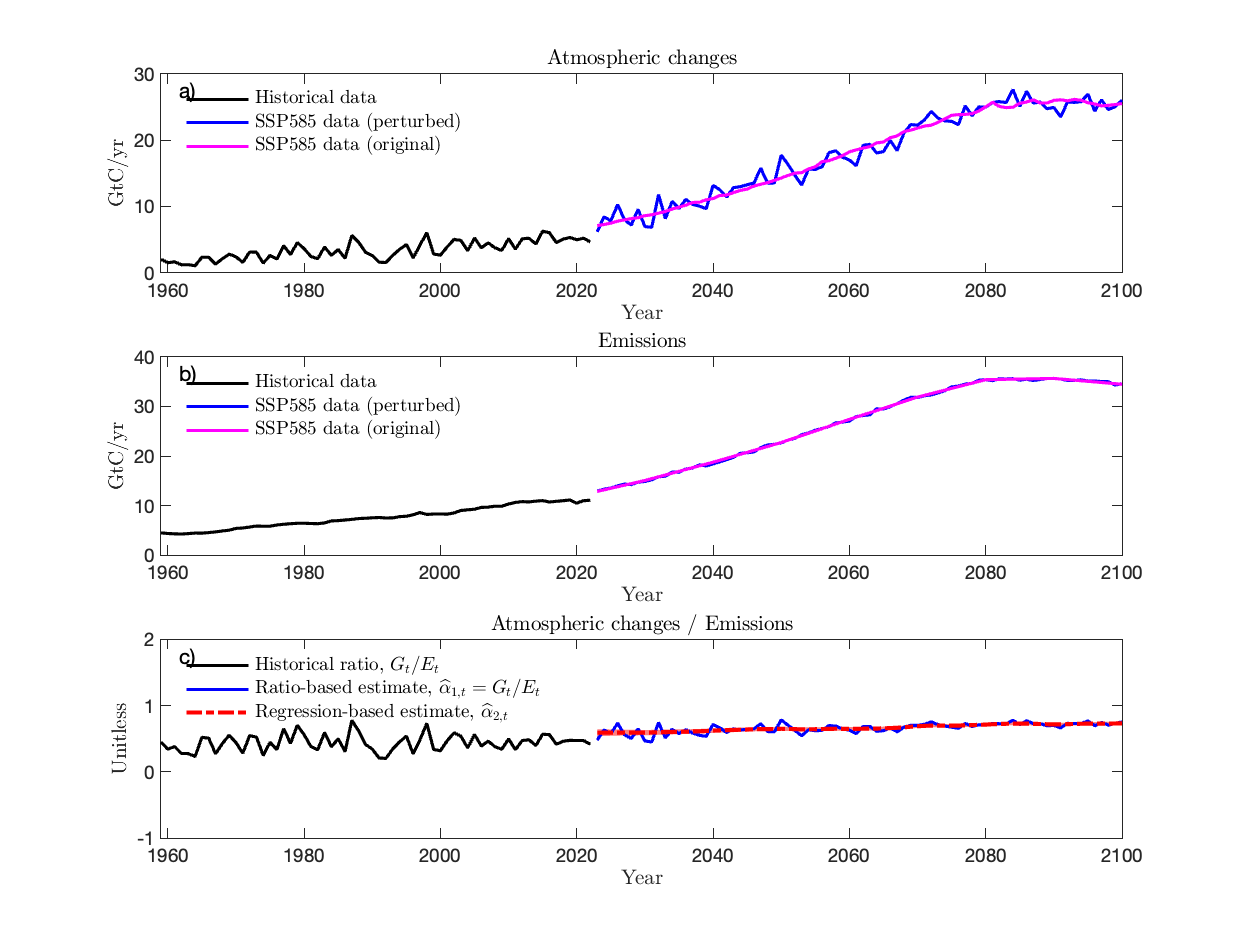}
    \label{fig:SSP585}
\end{figure}

\end{document}